\documentclass{article}

\usepackage{arxiv}

\linespread{1.07}

\usepackage{amsmath,amsfonts,amssymb, amsthm}

\usepackage[usenames,dvipsnames]{xcolor}
\definecolor{niceRed}{RGB}{190,38,38}
\definecolor{niceBlue}{HTML}{0466a7}

\usepackage[utf8]{inputenc} 
\usepackage[T1]{fontenc}    
\usepackage{hyperref}       
\hypersetup{
	colorlinks = true,
	urlcolor = {black},
	linkcolor = {niceBlue},
	citecolor = {niceRed} 
}

\usepackage{natbib}
\bibpunct[, ]{(}{)}{,}{a}{}{,}%
\def\BIBand{and}%

\usepackage{xurl}           
\usepackage{booktabs}       
\usepackage{multirow}
\usepackage{amsfonts}       
\usepackage{nicefrac}       
\usepackage{microtype}      
\usepackage{cleveref}       
\usepackage{graphicx}
\usepackage{doi}

\usepackage{subcaption} 
\usepackage{booktabs}	
\usepackage{multirow}
\usepackage{float}

\usepackage{mathtools}

\DeclareMathOperator*{\argmax}{arg\,max}
\DeclareMathOperator*{\argmin}{arg\,min}

\usepackage[ruled]{algorithm2e} 


\newtheorem{theorem}{Theorem}[section]
\newtheorem{proposition}{Proposition}[section]

\DeclareMathOperator{\Acal}{\mathcal{A}}
\DeclareMathOperator{\Gcal}{\mathcal{G}}
\DeclareMathOperator{\Ical}{\mathcal{I}}
\DeclareMathOperator{\Lcal}{\mathcal{L}}

\DeclareMathOperator{\Pcal}{\mathcal{P}}
\DeclareMathOperator{\Scal}{\mathcal{S}}
\DeclareMathOperator{\Vcal}{\mathcal{V}}
\DeclareMathOperator{\Xcal}{\mathcal{X}}
\DeclareMathOperator{\R}{\mathbb{R}}
\DeclareMathOperator{\E}{\mathbb{E}}
\newtheorem{result}{Result}

\title{Revenue in First- and Second-Price Display Advertising Auctions: Understanding Markets with Learning Agents}


\newif\ifuniqueAffiliation
\uniqueAffiliationtrue

\ifuniqueAffiliation 
\author{
	Martin Bichler \\
	\small Department of Computer Science\\
	\small Technical University of Munich\\
	\small \texttt{m.bichler@tum.de} \\
	\And
	 Alok Gupta\\
	\small Carlson School of Management\\
	\small University of Minnesota \\
	\small \texttt{alok@umn.edu} \\
	\And
	Matthias Oberlechner\\
	\small Department of Computer Science\\
	\small Technical University of Munich \\
	\small \texttt{matthias.oberlechner@tum.de}\\
}
\else
\usepackage{authblk}

\setlength{\affilsep}{0em}
\newbox{\orcid}\sbox{\orcid}{\includegraphics[scale=0.06]{orcid.pdf}} 
\author[1]{%
	\href{https://orcid.org/0000-0000-0000-0000}{\usebox{\orcid}\hspace{1mm}David S.~Hippocampus\thanks{\texttt{hippo@cs.cranberry-lemon.edu}}}%
}
\author[1,2]{%
	\href{https://orcid.org/0000-0000-0000-0000}{\usebox{\orcid}\hspace{1mm}Elias D.~Striatum\thanks{\texttt{stariate@ee.mount-sheikh.edu}}}%
}
\affil[1]{Department of Computer Science, Cranberry-Lemon University, Pittsburgh, PA 15213}
\affil[2]{Department of Electrical Engineering, Mount-Sheikh University, Santa Narimana, Levand}
\fi


\hypersetup{
	pdftitle={Low Revenue in Display Ad Auctions: Algorithmic Collusion vs. Non-Quasilinear Preferences},
	pdfsubject={cs.GT},
	pdfauthor={Martin~Bichler, 	Alok~Gupta, Laura~Mathews, Matthias~Oberlechner},
	pdfkeywords={algorithmic collusion, display ad auctions, equilibrium learning} ,
}

\begin{document}
\maketitle

\begin{abstract}
	The transition of display ad exchanges from second-price to first-price auctions has raised questions about its impact on revenue. 
	Auction theory predicts the revenue equivalence between these two auction formats. 
	However, display ad auctions are different from standard models in auction theory in at least two important ways. First, automated bidding agents cannot easily derive equilibrium strategies in first-price auctions 
	because distributional information regarding competitors' values or even the number of competitors is not readily available. Second, due to principal-agent problems, bidding agents typically maximize return-on-investment and not payoff.
	The literature on learning agents for real-time bidding is growing because of the practical relevance of this area; most research has found that due to the lack of complete information to the agents, learning agents do not converge to an equilibrium. Specifically, research on algorithmic collusion in display ad auctions has argued that first-price auctions can induce symmetric Q-learning agents to tacitly collude, resulting in bid values below Nash equilibrium, which in turn leads to lower revenue compared to the second-price auction. Whether bids are in equilibrium or not cannot easily be determined from field data since the underlying values of bidders are unknown. In this paper, similar to the literature on algorithmic collusion, we draw on analytical modeling and numerical experiments and explore the convergence behavior of widespread online learning algorithms in both complete and incomplete information models. Contrary to prior results, we show that there are no systematic deviations from equilibrium behavior. 
	We also explore the differences in revenue from first- and second-price auctions. Prior research has not compared the revenue implications of the two auction formats for utility functions relevant in this domain such as return-on-investment. 
	We show that learning algorithms also converge to equilibrium, but revenue equivalence does not hold. 
	Our results show that collusion may not be the explanation for lower revenue with first-price auctions.  Instead, we show that even with equilibrium behaviors, second-price auction achieves higher expected revenue compared to the first-price auction, i.e., the change in auction format might have had substantial and non-obvious consequences for ad exchanges and advertisers bidding for display ads.
\end{abstract}

\keywords{algorithmic collusion \and display ad auctions \and equilibrium learning} 


\section{Introduction}\label{sec:intro}

Real-time bidding is a means by which advertising inventory is sold on a per-impression basis \citep{choi2020online}. Display ad auctions are a prime application where advertisers compete for ads. Digital advertising is an industry with \$129 billion spent in the U.S. in 2019, which is more than half of total media advertising spending.\footnote{\url{https://skai.io/blog/monday-morning-metrics-digital-ad-spending-more-than-half-us-media-spend/}, accessed March 18, 2024}
In 2022, more than 90\% of all digital display ad spend were transacted via real-time bidding \citep{yuen22}. Moreover, the real-time bidding market size is expected to grow substantially in the next years \citep{tunuguntla2021near}. The Google Ad Manager, OpenX, PubMatic, and Xandr are among the largest ad exchanges operating these auctions. Not surprisingly, there is a growing literature in Information Systems and the Management Sciences on display ad auctions \citep{mehta2020sustaining, sayedi2018real, choi2020online, christopher2022bypassing, balocco2024lemon}. 

In the past, most ad exchanges used the second-price auction. Consequently, most of the literature on display ad auctions focuses on repeated second-price auctions under different models \citep{conitzer2022multiplicative, chen2023complexity, ciocan2021tractable}. 
However, since 2019, several of them have begun to experiment with first-price auctions and are being used by major advertising exchanges \citep{despotakis2021first}. Google claimed this switch was to ``help advertisers by simplifying how they buy online ads.''\footnote{\url{https://support.google.com/adsense/answer/10858748?hl=en}, accessed March 18, 2024} One complaint by advertisers regarding second-price auctions was regarding the price floors that sellers introduced to enhance yield. These floors were motivated by the observation that the second-price auctions' revenue was often significantly below the first-price auctions. However, advertisers considered price floors as opaque and there were concerns about manipulation on part of the sellers.\footnote{\url{https://www.adexchanger.com/data-driven-thinking/are-artificial-price-floors-the-next-iteration-of-ad-fraud/}, accessed August 18, 2024} Such concerns do not arise with first-price auctions. 

While first-price auctions were adopted by large exchanges, others still offer second-price auctions, and the choice of the auction format is a matter of ongoing debate. 
Some studies have observed decreased bid prices after the switch \citep{alcobendas2021adjustment}, and some ad exchanges reported significant revenue losses after the switch to first-price auctions.\footnote{\url{https://www.adexchanger.com/programmatic/unraveling-the-mystery-of-pubmatics-5-million-loss-}\\ \url{from-a-first-price-auction-switch/}, accessed October 1, 2024} All of them report that the adaptation of bidding strategies was not instant but took time for agents to adapt. Obviously, if only one bidder starts to shade his bid below and the others don't he would lose much more often. 
Overall, the empirical analysis of this policy change is challenging due to many confounding factors \citep{goke2021bidders}. Changes in the economy lead to changes in advertising spending over time. Ad exchanges switched at different times, which might have shifted some of the demand from one to another exchange. Several factors such as the type of impression, the day of the week, the season, and the type of publisher, all affect prices. Due to changes in macro and microeconomic conditions, several confounding factors and limited availability of data (such as advertiser valuations), it is difficult for ad exchanges to identify the causal impact of a change in their auction format on their revenue.\footnote{This statement is based on private communication with a large ad exchange.} 



The value of economic models lies in their ability to perform comparative statics, allowing for the analysis of how isolated changes in exogenous parameters affect economic outcomes while holding other factors constant (ceteris paribus). Even though the details of real-world display advertising markets such as the exact valuations or the bidding strategies implemented, might differ, we aim to understand how auction format changes impact revenue in a model that captures key characteristics of display ad auctions. 

Standard game-theoretical models of first- and second-price auctions are Bayesian games in which agents are assumed to follow Bayes-Nash equilibrium (BNE) strategies that are determined ex-ante. The celebrated revenue equivalence theorem predicts that the first- and the second-price auctions achieve the same revenue in expectation in a model with independent prior valuations \citep{vickrey1961counterspeculation, myerson1981OptimalAuctionDesign}. However, display ad auctions are different in at least two important ways: the bidding strategies are learned and the objective of the bidding agents is return-on-investment (ROI) rather than payoff (aka. a quasilinear utility function).

\subsection{Learning agents and algorithmic collusion}

In display ad auctions bidding is fully automated, and bids need to be submitted in milliseconds. In a first-price auction, bidders want to shade their bid below their valuation for an impression to make a profit. However, they cannot simply follow a pre-determined equilibrium bidding strategy. Without knowing the valuations of others or even only a prior distribution of valuations, deriving an equilibrium bidding strategy is impossible. Automated bidding agents need to learn a good strategy while bidding. Not surprisingly, after the move to first-price auctions, a growing literature using (multi-armed) bandit algorithms in display ad auctions has emerged \citep{liang2024online, kumar2024strategically, wang2023learning, zhang2023online,ai2022no, zhang2022leveraging, han2020optimal, tilli2021multi}. 
Bandit algorithms are fast and adapt to changes in the environment quickly. So, their use for automated bidding is natural. There is extensive literature on such online algorithms, and they are widely used for online optimization problems \citep{lattimore2020bandit}. 
Vanishing regret (aka. no regret) is the central design goal, but there are also popular bandit algorithms without vanishing regret.

However, bidding in repeated auctions is not a standard online learning problem. Advertisers compete against each other in a game-theoretical environment and their actions influence each others' profit. It is not clear that algorithms designed for online problems would perform well in game-theoretical environments. 
Actually, it is well-known that online learning algorithms in games may cycle, end up in off-equilibrium states, or even in chaotic dynamics \citep{sanders2018prevalence}. The analysis of learning dynamics in games can be arbitrarily complex in general \citep{andrade2021learning}, and not much is known about online learning algorithms in auction games.

Among the off-equilibrium outcomes of learning dynamics, \textit{algorithmic collusion} captured the most attention in the broader public \citep{Calvano.2020, Klein.2021, hansen2021frontiers, abada2023artificial}. Algorithmic collusion in pricing games such as a Bertrand oligopoly refers to algorithms that converge to a supra-competitive price higher than the Nash equilibrium. The topic got the attention of regulators \citep{OECD.2017} as it might jeopardize consumer welfare when algorithms are used for pricing on digital platforms. The jury about algorithmic collusion in pricing is still out, and there are arguments for but also against this being a concern in practice in pricing \citep{Boer.2022}. 
Display ad auctions are highly automated markets and algorithmic collusion might arise here as well. In such markets, we have millions of repeated interactions among bidders for similar types of impressions. For example, the Google Display Network serves 24.17 billion impressions per day.\footnote{\url{https://www.business2community.com/online-marketing/how-many-ads-does-google-serve-in-a-day-0322253}}
Indeed, \citet{banchio2022artificial} reported the results of numerical experiments with bidding agents based on specific Q-learning algorithms in a complete-information model, where the value of bidders is public knowledge and the competition doesn't change. 
Interestingly, they find that first-price auctions with no additional feedback lead to \textit{tacit-collusive outcomes}, while second-price auctions do not. Low revenue in first-price auctions is an obvious concern to ad exchanges, who might lose billions of dollars. However, this is also important to understand for advertisers, who might either decide to use such algorithms or exploit the off-equilibrium behavior of others in order to increase their payoff. Therefore, a key question in this paper is:  \textit{Can we assume that the repeated interaction of bidding agents in display ad auctions leads to an equilibrium?} If this is not the case, then this is a concern for ad exchanges, advertisers, and publishers alike. 

\subsection{Return-on-investment}

If learning agents that maximize payoff converge to equilibrium, then the celebrated revenue equivalence theorem \citep{vickrey1961counterspeculation} holds, and we can assume that the change from second- to first-price auctions is without loss. 
Given that bidding is highly automated, advertisers need to rely on specialized demand-side platforms (DSPs) to do the bidding on their behalf. DSPs are independent firms, and they don't just maximize the profit of the advertiser without a limit. Rather they are given a budget by the advertiser for a specific marketing campaign. The advertiser does not want any money back but wants to have the budget used most effectively. 
There is a large (academic and practitioner) literature on advertising and, in particular, on display ad auctions suggesting that return-on-investment (ROI) or sometimes return-on-spend (ROS) are the primary objectives (see Section \ref{sec:util_models}). 
Revenue equivalence does not need to hold if bidders do not maximize payoff. Unfortunately, deriving equilibrium and understanding the revenue ranking of auctions for such non-quasilinear utility functions analytically leads to intractable non-linear differential equations. For such problems, we do not even have an exact mathematical solution theory, and no equilibrium strategies are known for bidders that maximize ROI or ROS.

\subsection{Contributions}

We make two main contributions. In a \textit{first contribution}, we aim to analyze the outcome of a market with learning agents and explore the likelihood of collusion more broadly. 
While there is extensive literature suggesting that auto-bidders use bandit algorithms, different types of such algorithms are available. Through extensive numerical experiments, we explore a variety of learning algorithms that compete against the same or different algorithms. We analyze two models: The \textit{complete-information environment} mimics a stylized competition of advertisers with a predetermined value for an impression of a particular type, e.g. two luxury car manufacturers bidding for users who are interested in expensive cars. The complete-information model was used in earlier studies on algorithmic collusion in auctions for its simplicity \citep{banchio2022artificial}. 
Even in high-frequency display ad auctions, this model might be too simple. Impressions are not all the same and they differ in the attributes of the user. Therefore, even though there are millions of display ad auctions per day, competitors are not exactly the same in each auction and the advertisers participating in a specific auction are better modeled as draws from a distribution. In an incomplete-information (aka. Bayesian) game model, we only need to assume a prior distribution of bidders, not specific bidders with exact values in each auction. For the analytical derivation of equilibrium bidding strategies, this prior distribution needs to be available to the analyst. However, the learning agents in our analysis do not know this distribution a priori, but they learn a strategy in a process of exploration and exploitation.\footnote{This effectively addresses one of the main criticisms of game theory, that of bidders needing strong informational assumptions, which was famously raised in the Wilson doctrine \citep{wilson1985game}.} 

We analyze both, the complete-information and the incomplete-information model and find that algorithmic collusion is an exception and not the rule. While we can replicate the findings by \citet{banchio2022artificial}, we also show that collusion is not robust even in a market with only symmetric Q-learning agents but different hyperparameters. When combining Q-learning agents with other popular learning algorithms such as Exp3 or Thompson Sampling, we do not observe collusion. 
Convergence to equilibrium might take thousands of rounds, but we also do not observe a systematic deviation from the equilibrium in earlier rounds. One might argue that the real world is less structured and in spite of millions of auctions per day, some changes in demand and supply occur. However, if we don't find algorithmic collusion in this stable repeated environment (complete and incomplete information), it is also not likely to emerge in an environment that is more dynamic. In contrast, if we observed collusion in a static environment in a complete-information model as in \citet{banchio2022artificial}, it is not clear that it will persist in more dynamic environments. 

In a \textit{second contribution}, we derive equilibrium strategies for bidders that maximize ROI and ROS in single-item auctions. Prior work showed equilibrium existence or derive Price-of-Anarchy bounds on the total value obtained by the advertisers \citep{aggarwal2019autobidding, chen2023complexity}. The problem of computing an equilibrium is computationally hard, and closed-form solutions are typically unknown for for various model assumptions \citep{aggarwal2024auto}. We take a different approach and use equilibrium learning to compute equilibria in markets with ROI and ROS bidders. Such numerical techniques to find equilibrium bidding strategies are new, but they allow us to find equilibrium even markets with such non-quasilinear utility models \citep{bichler2023soda}. For specific prior distributions such as uniform priors, we also provide analytical results.
We find that with low competition the second-price auction leads to higher expected revenue for the auctioneer than the first-price auction. As in the standard model with payoff-maximizing utility functions, the differences grow smaller with increasing competition, but the assumption of revenue equivalence is not satisfied anymore, and moving from a second- to a first-price auction is not without loss for publishers and ad exchanges. 

\section{Related Literature} \label{sec:lit}
Auctions and, in particular, online auctions have a long history in the information systems literature \citep{pinker2003managing, greenwald2010evaluating, cason2011experimental, bichler2010ResearchCommentaryDesigning, choi2018display}, beyond what we can cover in this section. We draw on specific strands in the literature close to our key research question. First, we discuss utility models, as they have been reported for display ad auctions. Second, we introduce relevant work on collusion in auctions. Finally, we cover the literature on equilibrium learning relevant to this paper.



\subsection{Utility Models of Bidders in Display Ad Auctions} \label{sec:util_models}

There is a large literature on display ad auctions and real-time bidding \citep{despotakis2021first}. The authors assume different utility models to describe the advertisers' objectives. Originally, authors used a standard quasi-linear utility function as is usual in auction theory \citep{edelman2007internet, despotakis2021first}. However, more recent literature does not assume payoff maximization anymore. The ROI became a popular metric because of marketing budget allocation \citep{szymanski2006impact, borgs2007dynamics,jin2018real, wilkens2017gsp}. Apart from payoff maximization, we analyze agents that learn to maximize expected ROI as a measure of the ratio of profit to cost:
\begin{equation*}
	\mathbb{E}[ROI] = \mathbb{E}[\tfrac{v-p}{p}],
\end{equation*}
where $v$ denotes the valuation and $p$ the payment or cost.\footnote{Note that this ratio becomes very high if prices are close to zero. In practice, there are minimum bid prices (aka. floors) and we rule out division by zero as a boundary case for our theoretical analysis in this paper.}
One reason for ROI optimization is that advertising is usually delegated from the advertising firm (the principal) to an agent. 
In display ad auctions, a demand-side platform (DSP) gets a budget and is responsible for bidding on impressions in milliseconds \citep{yuan2014survey}. This requires a sophisticated technical infrastructure that advertisers don't have. The agent might have an incentive to spend as much as possible of the principal's money. To control the actions of the agent, the DSPs are given a budget by this advertiser \citep{bichler2018principal}. ROI measures how efficiently the investment or budget for a campaign is used by the DSP and has, therefore, become a central objective for DSPs. The advertiser does not get money back, even if the payoff of an impression is low, and the DSPs are compared by the ROI they achieve.

Apart from ROI, the practitioners' literature also discusses ROS.
\begin{equation*}
	\mathbb{E}[ROS] = \mathbb{E}[\tfrac{v}{p}]
\end{equation*}
ROS can be motivated from agents that aim to maximize total value of the impressions subject to payments being less than a budget constraint $B$ \citep{tunuguntla2021near}. In an offline auction where all $M$ impressions are auctioned off at the same time, the utility of the bidder could be described as
\[\max \sum_{m=1}^M\mathbb{E}(x_m v_m) \quad 
s.t. \sum_{m=1}^M x_m p_m \leq B,
\]
where $x_m$ is a binary variable that equals one upon winning and zero otherwise, $p_m$ is the payment, and $v_m$ the value per impression. \citet{tunuguntla2021near} only require the budget constraint for a campaign to hold in expectation and derive the Lagrangian of the resulting optimization problem. Based on the first-order condition for this Lagrangian, they derive a bidding strategy $b_m = v_m/\lambda^*$, where $b_m$ is the bid, and $\lambda^*$ defines a threshold. In a first-price auction $b_m=p_m$. \citet{tunuguntla2021near} devise an algorithm to learn an optimal $\lambda^*$. With an optimal $\lambda^*$, the advertiser wins the subset of auctions for which the value per expenditure is greatest. 
By rearranging terms, the optimal shading factor $\lambda=v_m/p_m$ is the ratio of the value for an impression and its payment, similar to ROS. This is reminiscent of widespread heuristics for the knapsack problem, which rank-order items by the ratio between value and weight of objects. $\lambda \in [0,1]$ is also described as \textit{multiplicative shading factor} in recent publications on real-time bidding \citep{balseiro2019learning, Tardos2022}. 
Looking at the practitioner literature, ROI or ROS seem to be dominant in the practice of real-time bidding.\footnote{see \url{https://www.indeed.com/career-advice/career-development/roas-vs-roi},\newline\url{https://instapage.com/blog/roas-vs-roi-which-metric-should-you-use/},\newline \url{https://hawksem.com/blog/roi-vs-roas-similarities-differences/}} 


\subsection{Collusion in Auctions and Pricing}

Collusion is a key concern in the literature on auctions, and there is much literature about which auction rules are more susceptible \citep{skrzypacz2004tacit, fabra2003tacit,blume2008modeling}. 
It is often a topic in second-price auctions, less so in first-price auctions (see \citet{krishna2009auction}). 
Here, we deal with tacit algorithmic collusion, where low off-equilibrium prices arise without explicit communication among bidders. In particular, we want to understand algorithmic collusion, i.e., collusion that arises from the repeated interaction of learning agents without them being programmed for explicit collusion. 

Algorithmic collusion was first discussed in the context of oligopoly pricing. Early treatments go back to \citet{greenwald2000shopbots}. \citet{Calvano.2020} analyzes firms who play an infinitely repeated game, pricing simultaneously in each stage and conditioning their prices on history. They find that Q-learning algorithms consistently learn to charge supra-competitive prices. These prices are sustained by collusive strategies with a finite phase of punishments followed by a gradual return to cooperation. \citet{Klein.2021} shows how Q-learning is able to learn collusive strategies when competing algorithms update their prices sequentially; as opposed to \citet{Calvano.2020}, for collusion to occur in their sequential-move setting, they do not require that algorithms can condition on own and competitor past prices. Other related papers on algorithmic collusion in oligopoly pricing were written by \citet{Hettich.2021} and \citet{Asker.2022}, for example. 

\citet{banchio2022artificial} were the first to analyze collusion in the context of real-time bidding in display ad auctions. The environment is different from the stylized oligopoly pricing models. They find that when bidders use specific Q-learning algorithms to determine their bids, the auction format and other design choices can have a first-order effect on revenues and bidder payoffs. The revenues can be significantly lower in first-price auctions than in second-price auctions. In particular, first-price auctions with no additional feedback lead to tacit-collusive outcomes, while second-price auctions do not. However, a simple practical auction design choice - revealing to bidders the winning bid after the auction - can make the first-price auctions more competitive again. Yet, the paper provides evidence that in with specific implementations of symmetric Q-learning agents the revenue in display ad auctions can be lower than that of the second-price auction.

\subsection{Equilibrium Learning}
\label{sec:learning}
The literature on algorithmic collusion and the recent work on equilibrium learning in auctions are closely related and address the same underlying question. This connection has not been explored in prior literature. Equilibrium learning tries to find equilibria, while algorithmic collusion aims to identify situations when learning agents might end up in collusive, off-equilibrium states. 
Almost all of the literature on equilibrium learning deals with finite and complete-information games \citep{fudenbergLearningEquilibrium2009}. 

Earlier approaches to finding equilibria in auctions were usually setting-specific and relied on reformulating the first-order condition of the utility function as a differential equation and then solving this equation analytically (where possible) \citep{vickrey1961counterspeculation,krishna2009auction,ausubel2019CoreselectingAuctionsIncomplete}. \citet{armantier2008ApproximationNashEquilibria} introduced a BNE-computation method based on expressing the Bayesian game as the limit of a sequence of complete-information games. They show that the sequence of NE in the restricted games converges to a BNE of the original game. While this result holds for any Bayesian game, setting-specific information is required to generate and solve the restricted games. \citet{rabinovich2013ComputingPureBayesianNash} study best-response dynamics on mixed strategies in auctions with finite action spaces. 

As described earlier, the paper by \citet{banchio2022artificial} analyzes repeated first-price auctions with Q-learning agents, who both have the same fixed value, which is common knowledge. They show that Q-learning can lead to phases with collusive strategies in the first-price auction. \citet{deng2022nash} analyze a similar model analytically and they find that mean-based algorithms converge to the Nash equilibrium in this complete-information model. {Note that the results by  \citet{deng2022nash} assume that at least two bidders have the same highest value, but convergence with a single highest-value bidder is still open.} 
Mean-based means roughly that the algorithm picks actions with low average rewards with low probability. 
This class contains most of the popular no-regret algorithms, including Multiplicative Weights Update (MWU), Follow-the-Perturbed-Leader (FTPL), {and} Exp3. This is also what we find in our complete-information experiments.  

As numerical technique to compute equilibrium for non-quasilinear utility models, we draw on SODA \citep{bichler2023soda}, a gradient-based algorithm that was shown to be very versatile and allowed for the computation of BNE in a large variety of different auction models. Whenever SODA converges to a pure strategy profile, it has to be an equilibrium. We will describe these algorithms in more detail below. This allows us to verify equilibrium in situations where no analytical solution is known, in particular in those where the objective of the bidding agents are ROI or ROS maximization. 


\section{Equilibrium Strategies}\label{sec:equi}
Let us now discuss the different utility functions for display ad auctions introduced in Section \ref{sec:util_models} and study the resulting equilibrium problem. We will introduce the equilibrium problem formally and show that for all but the standard quasi-linear utility function, we end up in non-linear differential equations. Note that, in general, there is no exact solution theory for the solution to systems of non-linear partial differential equations, which also explains why closed-form equilibrium strategies are only available for simple models such as first-price auctions with quasi-linear utility functions. 

The equilibrium problem for non-quasi-linear utility functions such as ROI and ROS maximization is much more challenging and no equilibrium predictions are known. For ROI bidders we provide an analytical solution for uniform priors. For ROS, even this restricted case is intractable analytically. 

The equilibrium learning algorithms introduced later will allow us to find numerical solutions to these equilibrium problems and compare the resulting equilibria. While numerical methods are widely used in engineering and science to solve systems of differential equations, their success in auction theory was rather limited and plagued by numerical instabilities \citep{fibich2011numerical}. Equilibrium learning provides a new and powerful approach to equilibrium analysis that has not been available so far.

\subsection{Model}

Formally, single-item auctions are modeled as incomplete-information games $ \mathcal G = (\Ical, \Vcal, \Acal, F, u)$ with continuous type and action spaces. Each bidder $ i \in \Ical = \{ 1,\dots,N\}$ observes a private value (type) $v_i \in \Vcal_i \subset \R $ drawn from some prior distribution $F$ over $\Vcal := \Vcal_1 \times \dots \times \Vcal_N$. With a slight abuse of notation, we also denote the marginal distribution over $\Vcal_i$ with $F$, since we only consider symmetric bidders with independent and identically distributed valuations. The corresponding probability density function is denoted by $f$. After observing their private types, bidders submit bids $b_i \in \Acal_i \subset \R$ and receive their payoffs as given by the (ex-post) utility function $u_i: \Acal \times \Vcal_i \rightarrow \R$ with $\Acal = \Acal_1 \times \dots \times \Acal_n$.
A pure strategy is a function $\beta_i: \Vcal_i \rightarrow \Acal_i$, mapping a bidder's value to an action. Given a strategy profile $\beta = (\beta_1, \dots, \beta_n) $, 
the expected (ex-ante) utility is defined by $ \tilde u_i(\beta_i, \beta_{-i}) := \E_{v_i \sim F}[\bar u_i(\beta_i(v_i), \beta_{-i}, v_i)]$ with the ex-interim utility 
\begin{equation}
    \bar u_i(b_i, \beta_{-i}, v_i) := \E_{v_{-i}}[u_i(b_i,\beta_{-i}(v_{-i}), v_i)].
\end{equation}
We say that a strategy profile $\beta $ is a BNE if no agent can increase their expected utility $ \tilde u_i $ by unilaterally deviating:
\begin{equation}
     \tilde u_i(\beta_i, \beta_{-i}) \geq  \tilde u_i(\beta_i', \beta_{-i}) \quad \forall \beta_i',\,\, \forall i \in \Ical.
\end{equation}

In the remaining part of this section, we analyze this model for different utility models, expressed by different ex-post utility functions $u_i$. 
Let $x: \Acal \rightarrow [0, 1]^N$ be the allocation vector, where $x_i(b) = 1$ if bidder $i$ gets the item, i.e., $b_i > \max_{j \neq i} b_j$ and $x_i(b) = 0$ else, and $p: \Acal \rightarrow \R^N $ the price vector. 
As described in the previous section, we consider payoff-maximizing or quasi-linear (QL)
\begin{equation}
    u_i^{QL}(b, v_i) = x_i(b)(v_i - p_i(b)),
\end{equation}
ROI-maximizing
\begin{equation}
    u_i^{ROI}(b, v_i) = x_i(b) \dfrac{v_i - p_i(b)}{p_i(b)},
\end{equation}
and ROS-maximizing agents
\begin{equation}
    u_i^{ROS}(b, v_i) = x_i(b)\dfrac{v_i}{p_i(b)}.
\end{equation}
Note that, depending on the payment rule, the price vector takes the values $p_i(b) = b_i$ in first-price and $p_i(b) = \max_{j \neq i} b_j $ in second-price auctions for bidder $i$ receiving the item, and $p_j(b) = 0$ for all other bidders. In the following analysis, we often use the shorter notation $p_i := p_i(b)$.

\subsection{Payoff-Maximizing Bidders}

With quasi-linear bidders (QL) that maximize payoff, the equilibrium bidding strategies in a first- and second-price auction are well known. The second-price auction has a dominant-strategy BNE of bidding truthfully ($\beta(v)=v$). For the first-price auction, we get a closed-form solution for the (non-truthful) equilibrium bidding strategy via the first-order condition of the expected utility function \citep{krishna2009auction}. Given that all opponents play according to a symmetric, increasing, and differentiable strategy $\beta$, the utility for bidder $i$ submitting bid $b$ with valuation $v$ is given by
\begin{equation}
	\bar u_i (b,\beta_{-i}, v) = G(\beta^{-1}(b))(v-b),
\end{equation}
where $G(\beta^{-1}(b))$ denotes the probability of winning with bid $b$. 
Using the first-order condition $\frac{d}{db} \bar u(b,\beta_{-i}, v) = 0$, one can derive the following ordinary differential equation (ODE)
\begin{equation}
	\dfrac{d}{dv}(G(v)\beta(v)) = v G'(v).
\end{equation}

If we assume independent uniformly distributed valuations for all $N$ agents, i.e., $G(v) = v^{N-1}$, and further suppose that $\beta(0)=0$, we get the equilibrium strategy $\beta(v) = \frac{N-1}{N} v$.
Since we introduce a reserve price $r>0$ for the ROS and ROI utility models, we will also use this reserve price here for better comparison. 
For valuations $v<r$, no bidder can make a positive profit, and consequently, they would not participate in the auction ($\beta_i(v)=0$). For higher valuations, we can again use the ODE to get a BNE with the additional initial condition $\beta(r) = r$. This leads to
\begin{equation}\label{eq:eq_fpsb_ql}
	\beta(v) =  \dfrac{N-1}{N}v + \dfrac{1}{N} \dfrac{r^N}{v^{N-1}} \quad \text{ for } v \geq r.
\end{equation}
See \citet{krishna2009auction} for the equilibrium strategy with general i.i.d. valuations. 
Note that the BNE is unique in this setting \citep{chawla2013AuctionsUniqueEquilibria}. 
Revenue equivalence of the first- and second-price auction in the independent private values model with payoff-maximizing bidders are central results in auction theory discussed in related textbooks. 

\subsection{ROI-Maximizing Bidders}
As discussed earlier, ROI maximization is widely mentioned as the objective of advertisers. 
The first observation is that the second-price auction continues to be strategyproof, as with payoff-maximizing bidders.  

\begin{theorem} \label{thm:roi_sp}
	The second-price sealed-bid single-object auction is strategyproof for bidders with an ex-post utility function of $u_i(b, v_i) = x_i(b) \tfrac{ v_i - p_i(b)}{p_i(b)}$.
\end{theorem}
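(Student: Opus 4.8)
The plan is to adapt the classical dominant-strategy argument for the second-price auction, exploiting the fact that under the ROI objective the \emph{sign} of the winning utility agrees with the sign of the quasi-linear payoff. Fix an arbitrary bidder $i$, an arbitrary value $v_i$, and an arbitrary profile of competitor bids $b_{-i}$, and write $m := \max_{j \neq i} b_j$ for the highest competing bid. The decisive structural feature of the second-price rule is that whenever $i$ wins, the payment equals $p_i(b) = m$, which does not depend on $i$'s own bid. Hence $b_i$ only controls \emph{whether} $i$ wins, not the price paid; the realized utility is $\tfrac{v_i - m}{m}$ upon winning and $0$ upon losing.

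First I would record the key observation that, because the reserve price guarantees $m \geq r > 0$, the fraction $\tfrac{v_i - m}{m}$ is strictly positive exactly when $v_i > m$, zero when $v_i = m$, and strictly negative when $v_i < m$. In other words, winning is profitable under the ROI objective under precisely the same circumstances as under the quasi-linear objective, so the optimal winning region is unchanged. Truthful bidding $b_i = v_i$ secures victory (up to tie-breaking) exactly on the set $\{v_i \geq m\}$ on which winning is weakly beneficial.

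Next I would rule out both directions of deviation against the fixed profile $b_{-i}$. Overbidding, $b_i > v_i$, can change the outcome only by turning a loss into a win in the range $v_i < m < b_i$; but there the winning utility $\tfrac{v_i - m}{m}$ is negative, strictly worse than the $0$ obtained by losing. Symmetrically, underbidding, $b_i < v_i$, can only turn a win into a loss in the range $b_i < m < v_i$, forfeiting the strictly positive utility $\tfrac{v_i - m}{m}$ in exchange for $0$. In every remaining case the outcome and payment coincide with those under truthful bidding. Therefore no deviation can raise $i$'s ex-post utility, for any $v_i$ and any $b_{-i}$, which is exactly the claim that truthful bidding is a dominant strategy.

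I do not expect a genuine obstacle here: the argument is the Vickrey dominant-strategy proof, and the only point requiring care is that the ROI transformation $t \mapsto t/m$ is strictly increasing in $t = v_i - m$ and preserves sign because the denominator $m$ is positive and fixed once $b_{-i}$ is given. The reserve price $r > 0$ is what keeps this denominator bounded away from zero and the utility well defined; ties at $b_i = m$ can be handled by any fixed tie-breaking rule, since at $v_i = m$ the winning utility is exactly $0$ and the bidder is indifferent.
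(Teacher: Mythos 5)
Your proposal is correct and follows essentially the same route as the paper's proof: fix the highest competing bid, observe that the second-price payment is independent of one's own bid, and run the standard Vickrey case analysis showing that overbidding can only turn a loss into a negative-ROI win and underbidding can only forfeit a positive-ROI win. Your explicit remark that the ROI transformation preserves the sign of $v_i - m$ because the denominator is positive (thanks to the reserve price) is a welcome clarification that the paper leaves implicit, but it is the same argument.
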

\begin{proof}
Consider bidder 1 and suppose that $p_1 = \max_{j\ne 1}b_j$ is the highest competing bid. By bidding $b_1$, bidder 1 will win if $b_1 \geq p_1$ and lose if $b_1 < p_1$. Let's assume that $b_1$ equals the value of this bidder. Now, suppose that he bids an amount $b'_1 < b_1$. If $b_1 > b'_1 \geq p_1$, then he still wins, and his ROI is still $(b_1-p_1)/p_1 >0$. If $p_1 > b_1 > b'_1$, he still loses. However, if $b_1 > p_1 > b'_1$, then he loses, whereas if he had bid $b_1$, he would have received a positive ROI. Thus, bidding less than $b_1$ can never increase his utility but, in some circumstances, may actually decrease it. Similarly, it is not profitable to bid more than $b_1$. If he bids $b'_1 > b_1$, then there could be $b'_1 > p_1 > b_1$, making his utility negative. If $b'_1 \geq b_1 > p_1$, then this increase to $b'_1$ would not change his ROI.
\end{proof}

Let us next discuss the first-price auction. Analogous to the model with payoff-maximizing agents, we can try to use the first-order condition to derive equilibrium strategies for ROI-maximizing bidders. The ex-interim utility for bidder $i$ with valuation $v$ and bid $b$, given the opponents strategy $\beta$, is defined by
\begin{equation} \label{eq:foc_roi}
	\bar u_i (b, \beta_{-i}, v) = G(\beta^{-1}(b)) \dfrac{v-b}{b}.
\end{equation}
Similar to the first-price auction with quasi-linear bidders, we can derive a first-order condition assuming symmetric strategies and obtain

\begin{equation}
    \frac{d}{dv} \beta(v) = \left( v \beta(v) - \beta(v)^2 \right) \dfrac{G'(v)}{vG(v)}.
\end{equation}
This first-order, non-linear ODE is in general hard to solve analytically. If we restrict our analysis to independent, uniformly distributed valuations over $[0,1]$, i.e., $F(v) = v$ and $ G(v) = v^{N-1} $, the first-order condition \eqref{eq:foc_roi} reduces to
\begin{equation}
    \dfrac{d}{dv} \beta(v) = (N-1) \left( \dfrac{\beta(v)}{v} - \dfrac{\beta(v)^2}{v^2} \right).	
\end{equation}
For this specific form of the ODE, i.e., $\tfrac{d}{dv}\beta(v) = f \big(\tfrac{\beta(v)}{v}\big)$ with $f(x) = (N-1)(x-x^2)$, we can use substitution and derive an analytical solution of the ODE by separation of variables. 
Similar to the quasi-linear case, one can argue that bidders with a valuation $v < r$ do not participate, and that they bid exactly $ r $ for $v = r$.
The resulting initial value problem leads to the following equilibrium strategies for different numbers of bidders $N$:
\begin{equation}
    \beta(v) = 
    \begin{cases}
        \dfrac{v}{1 - \log(r) + \log(v)} &\text{ for } N = 2 \\[10pt]
        \dfrac{(N-2)v^{N-1}}{(N-1)v^{N-2} - r^{N-2}} &\text{ for } N > 2.
    \end{cases}
\end{equation}

Note that we bound the bids from below to ensure that the utility is well-defined. For the analytical solution depicted in Figure \ref{fig:analytical_QL_ROI_uniform}, we assume a reservation price of $r=0.05$. 
For more general prior distributions, such a non-linear ODE is very difficult to solve with standard analytical or numerical techniques \citep{bichler2023soda}. 
\begin{figure}[h]
    \begin{center}
        \begin{subfigure}{0.45\textwidth}
            \includegraphics[width=0.9\textwidth]{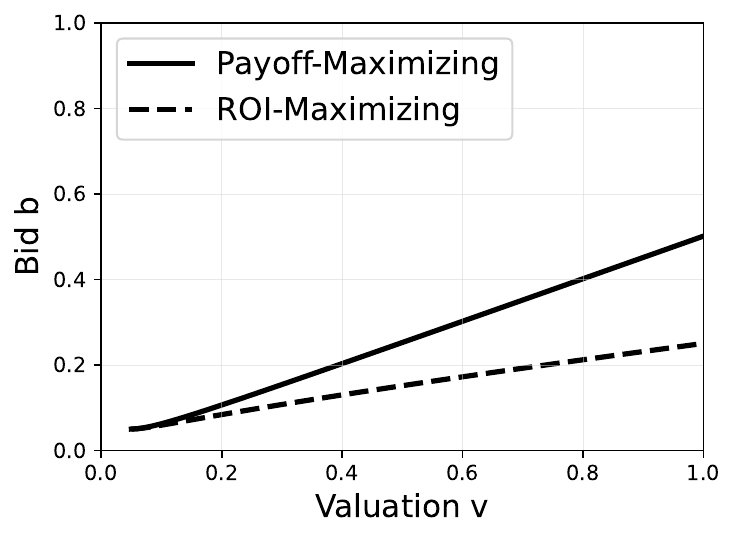}
            \caption{$N=2$ Bidders}
        \end{subfigure}
        \begin{subfigure}{0.45\textwidth}
            \includegraphics[width=0.9\textwidth]{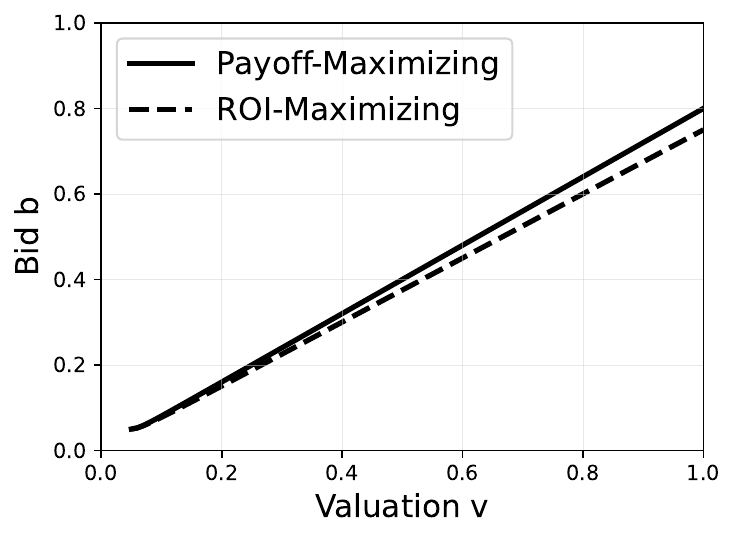}
            \caption{$N=5$ Bidders}
        \end{subfigure}
        \caption{Analytical BNE for Payoff- and ROI-Maximizing Agents in a First-Price Auction with Uniform Prior.}
        \label{fig:analytical_QL_ROI_uniform}
    \end{center}
\end{figure}

\subsection{ROS-Maximizing Bidders}
Finally, we analyze ROS-maximizing agents. As indicated earlier, there is an important difference between ROI and ROS: While the expected utility for ROI becomes negative once the price is higher than the value, ROS never becomes negative, independent of how high the payment is. In other words, with a monotone allocation rule, if a bidder is losing, they can always increase their bid without ever having a negative payoff. This leads to a situation where agents can place the highest possible bid, regardless of how high this value is and which payment rule is being used. 

\begin{proposition}
    In the first- and second-price sealed-bid single-object auction where agents are ROS maximizers, ties are broken randomly, and the action space is bounded from above, submitting the maximal bid is an equilibrium.
\end{proposition}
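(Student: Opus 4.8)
The plan is to verify the Bayes--Nash equilibrium condition directly for the profile in which every bidder adopts the constant strategy $\beta_i(v_i) = \bar b$, where $\bar b$ denotes the maximal element of the (bounded) action space. Since this candidate strategy ignores the private value, it suffices to check that for each bidder $i$ and each realized type $v_i$, bidding $\bar b$ is a best response when every opponent also bids $\bar b$; this collapses the Bayes--Nash condition into a pointwise (ex-interim) optimality check, so no distributional computation over $F$ is actually required.

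First I would compute the payoff along the candidate profile. When all $N$ bidders submit $\bar b$, random tie-breaking awards the item to bidder $i$ with probability $1/N$. In the first-price auction the winner pays $\bar b$; in the second-price auction the runner-up bid is also $\bar b$, so the winner again pays $\bar b$. Hence in both formats the ex-interim utility of type $v_i$ equals $\tfrac1N \cdot \tfrac{v_i}{\bar b}$, which is non-negative because $v_i \ge 0$ and $\bar b > 0$ (the reserve $r>0$ keeps the price bounded away from zero, so the ratio is well-defined).

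Next I would enumerate the feasible deviations. Because the action space is bounded above by $\bar b$, any deviating bid $b'$ must satisfy $b' \le \bar b$. If $b' < \bar b$, then every opponent bids $\bar b > b'$, so the deviator loses with certainty and obtains utility $0$; the only remaining choice, $b' = \bar b$, is the candidate strategy itself. Comparing the equilibrium payoff $\tfrac1N \cdot \tfrac{v_i}{\bar b} \ge 0$ against the deviation payoff $0$ shows that no unilateral deviation is strictly profitable (and it is strictly worse whenever $v_i>0$), which establishes the claim for both payment rules simultaneously.

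The step I expect to matter most is conceptual rather than computational: the argument hinges on the observation --- already isolated in the paragraph preceding the statement --- that the ROS objective $v_i/p_i$ is always non-negative, so winning is weakly preferable to losing no matter how large the price grows. This is precisely what fails for quasi-linear and ROI bidders, where overpaying yields negative utility and thereby disciplines the bid from above; with ROS there is no such penalty, so each bidder is willing to bid all the way up to the cap. The only genuine care needed is in handling the tie-breaking (to pin down both the $1/N$ winning probability and the second-price payment of $\bar b$) and in invoking the upper bound of the action space to rule out upward deviations.
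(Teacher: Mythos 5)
Your proof is correct and follows essentially the same argument as the paper: all bidders at the cap $\bar b$ yields ex-interim utility $\tfrac{1}{N}\tfrac{v_i}{\bar b}$ in both formats, and since the bounded action space only permits downward deviations, which lose with certainty and yield $0$, no profitable deviation exists. Your treatment is slightly more careful than the paper's (handling $v_i = 0$ and spelling out the second-price payment), but the route is identical.
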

\begin{proof}
Assume all agents bid according to $\beta(v) = \bar b$, where $\bar b > 0$ is the maximal bid the agents can submit. Then, the ex-interim utility is $\bar u(b,\beta_{-i}, v_i) = \tfrac{1}{N} \tfrac{v_i}{\bar b} > 0$ if $b = \bar b$, and $0$ if $b < \bar b$. Since agents can only deviate from $\beta(v)$ by bidding less, this would strictly decrease their payoff. Therefore, $\beta(v) = \bar b$ is indeed an equilibrium strategy.
\end{proof}

This equilibrium strategy is also found by equilibrium learning algorithms. Obviously, this is unrealistic, because it ignores that bidders always have only a finite budget available.  
In this article, we consider ROS-maximizing bidders with a per-auction budget (ROSB), which we simply model using a log barrier function. This ROSB utility is given by
\begin{equation}
    u_i^{ROSB}(b, v_i) = x_i(b) \left( \dfrac{v_i}{p_i(b)} + \log(B - p_i(b)) \right),
\end{equation}
where $B$ is the budget. It is difficult to derive a BNE analytically for such a utility function. 
The first-order condition of the ex-interim utility $\bar u_i $ for the first-price payment rule, under the assumption of symmetric bidders, leads to
\begin{align}
    \label{eq:foc_rosb}
    \dfrac{d }{dv} \beta(v) &= (B-\beta(v))\dfrac{G'(v)}{G(v)} \dfrac{v\beta(v) + \beta(v)^2\log(B-\beta(v))}{v(B-\beta(v))-\beta(v)^2}.
\end{align}
Similar to the first-order condition for ROI-maximizing bidders, this is a general non-linear ODE. But even for a uniform prior, we cannot solve this ODE analytically. 
As we will discuss, SODA provides a numerical technique to quickly converge to an equilibrium in all of these utility models.


\section{Learning Algorithms}
In the following section, we focus on the learning algorithms agents employ to increase their individual rewards. 
From an agent's point of view, the repeated interaction in an auction is an online optimization problem. 
At each stage $t = 1,2,\dots$ the agent chooses a strategy $x_t \in \Xcal $ from some set and gets a utility $u_t(x_t)$. 
The utility in each stage is determined by the opponents' current strategy, i.e., $u_t(\,\cdot\,) = u_i(\,\cdot\,, x_{-i,t})$. 
If we assume that $\Xcal$ is a closed convex subset of some vector space and $u_t$ are concave functions (or convex loss functions), this problem is known as an online convex optimization problem \citep{shalev-shwartzOnlineLearningOnline2011}.

A standard performance measure of algorithms generating a sequence of strategies $x_t$ in such a setting is the notion of (external) regret. It is defined by
\begin{equation}
    \text{Reg}(T) = \max_{x \in \Xcal} \sum_{t=1}^T u_t(x) - u_t(x_t)
\end{equation}
and denotes the difference between the aggregated utilities after $T$ stages and the best fixed action in hindsight. We say that an algorithm has \textit{no regret} if the regret $ \text{Reg}(T)$ grows sublinearly. 
Note that this notion of no regret does not make any assumptions on the distribution of the utilities. 
It is, therefore, especially well suited in a multi-agent setting. 
It is also well known that if all agents follow a no-regret algorithm in multi-agent settings such as our repeated auction, the time average of the resulting strategies converges to a \textit{coarse correlated equilibrium (CCE)} \citep{blum2007ExternalInternalRegret}. There are many no-regret learners such as follow-the-regularized-leader (FTRL), where the agent plays a regularized best response to the aggregated utility function \citep{shalev-shwartzOnlineLearningOnline2011}. If one considers linear surrogates of the (concave) utility function, methods such as dual averaging \citep{nesterov2009PrimaldualSubgradientMethods} can be derived from FTRL.

While FTRL requires knowledge of the functions $u_t$ \textit{(full information feedback)} to compute an update, gradient-based methods use only the gradient $\nabla u_t(x_t)$ evaluated at the $x_t$ to compute the linear surrogate and update the strategy (\textit{gradient feedback}). 
Algorithms with \textit{bandit feedback} require even less and rely solely on the observed utility $u_t(x_t)$ of the played strategy $x_t$ in each stage $t$.
In our experiments, we focus on algorithms in the bandit model applied to repeated auctions with finitely many actions or bids. 
The bandit model mimics the type of information that real-time bidding agents get in display ad auctions after bidding. SODA is an exception as it uses gradient feedback. We primarily use SODA as a numerical tool to provide us with equilibrium bidding strategies to compare against. 

\subsection{Exp3-Algorithm}

If we consider an auction where agents have fixed valuations (complete information), the online optimization problem can be modeled as a multi-armed bandit where each arm describes one out of the possible discrete bids $b \in \Acal^d = \{b_1,\dots,b_m\}$. In general, multi-armed bandit problems are reinforcement learning problems with a single state, where a learner repeatedly chooses an action from a set of available actions and receives a reward associated with the chosen action  \citep{lattimore2020bandit}. An agent repeatedly pulls an arm of a slot machine and aims to maximize the cumulative reward. 
Well-known bandit algorithms include $\varepsilon$-greedy, UCB (Upper Confidence Bound), or Exp3 (Exponential-weight algorithm for Exploration and Exploitation). 
We will focus on Exp3 as a representative bandit algorithm but also report results for other methods.

\begin{algorithm}[h]
\SetKwInOut{Input}{Input}\SetKwInOut{Output}{Output}
\Input{exploration parameter $\varepsilon$, learning rate $\alpha$, number of rounds $T$}
Initialize strategy $x_1 = \tfrac{1}{m} (1, \dots, 1)$\; 
\For{$t=1,2,\dots,T$}{
    Choose action $b_t \sim P_t = (1-\varepsilon)\, x_t + \varepsilon\, \tfrac{1}{m} \textbf{1}_m$\;
    Receive reward $R_t$\;
    Update strategy $x_{t+1}(b) = \dfrac{x_t(b) \exp(\alpha R_t(b)))}{\sum_j x_t(b) \exp(\alpha R_t(b)))}$ for all $b \in \Acal^d$\;
}
\caption{Exp3-Algorithm}
\label{alg:bandit_alg}
\end{algorithm}
One key idea of Exp3 is to use an estimator for the reward of actions not played at time $t$.
We focus on a specific estimator known as an importance-weighted estimator. Let $b_t \in \Acal^d $ be the action played in iteration $t$ and $r_t \in [0,1]$ the respective reward. 
Then, the importance-weighted estimator $ R_t$ is defined by
\begin{equation}
    R_t(b) = \chi_{b_t}(b) \dfrac{r_t}{P_t(b_t)}, \quad \forall b \in \Acal^d
\end{equation}
where $P_t(b_t) $ is the probability of playing action $b$ in stage $t$,  $x_t \in \Xcal := \Delta \Acal^d$ the current mixed strategy, and $\chi$ the indicator function.
Note that $ R_t(b) $ is an unbiased estimate of $r_t$ for action $b$ conditioned on the history. This estimate is then used to update the strategy similar to the multiplicative weights update. 
This variant of Exp3 \citep{braverman2017selling} (see Algorithm \ref{alg:bandit_alg}) has an additional exploration probability, which we can also find in the bandit version of the online stochastic mirror descent from \citet[Sec.~31]{lattimore2020bandit}.
It is well-known that the Exp3 algorithm is a no-regret learner for $\varepsilon = \alpha \sim T^{-\beta}$ with $\beta \in (0,1)$ \citep{auer2002bandit}. 

\subsection{Q-Learning} \label{sec:qlearner}

Q-learning is the most well-known reinforcement learning algorithm. This is probably one reason, why it is predominantly used in the literature on algorithmic collusion \citep{Calvano.2020}, and specifically by \citet{banchio2022artificial}. 
In contrast to multi-armed bandit algorithms, Q-learning can handle different states of the world in which other actions might be optimal. Note that in \citet{banchio2022artificial} there is only one state and Q-learning reduces to an algorithm for a multi-armed bandit problem, a simple version of reinforcement learning. 

In Algorithm \ref{alg:q_learning_auction}, the input includes the discount factor $\gamma$, the learning rate $\alpha$, and the number of episodes $T$. The output is the learned Q-values for all state-action pairs. The Q-table is initialized with some initial value for all state-action pairs. The Q-learning update rule updates the Q-table based on the observed reward and the new state. The exploration-exploitation strategy used to select actions should be chosen carefully to balance exploration with exploitation. 
Similar to \citet{banchio2022artificial}, we focus on the $\varepsilon$-greedy rule, where the agent takes the action that maximizes the Q-value with probability $1-\varepsilon$ (exploitation) or takes an action at random with probability $\varepsilon$ (exploration). This allows us to compare our results to theirs. 
The exploration probability decays over time, e.g., according to $\varepsilon = \epsilon \exp ( - \beta t)$.
\begin{algorithm}[ht]
\SetKwInOut{Input}{Input}\SetKwInOut{Output}{Output}
\Input{Discount factor $\gamma$, learning rate $\alpha$, number of rounds $T$,\newline exploration parameter ($\epsilon,\, \beta$), initial Q-values $Q_0$}
Initialize Q-values $Q(b) = Q_0$ for all $b \in \Acal^d$\;
\For{$t=1,2,\dots,T$}{
    Set $\varepsilon = \epsilon \exp ( - \beta t) $\;
    Choose action $b_t \in \argmax_{b \in \Acal^d} Q(b)$ with probability $1-\varepsilon$ or at random with probability $\varepsilon$\;
    Receive reward $r_t$\;
    Update Q-value $Q(b_t) = (1-\alpha) Q(b_t) + \alpha \left[ r_t + \gamma \max_b Q_t(b) \right] $\;
    }
\caption{Q-Learning Algorithm without States}
\label{alg:q_learning_auction}
\end{algorithm}
Note that in general, the continuation value is the maximal Q-value of the following state $s_{t+1}$, i.e., $\gamma \max_b Q_t(b, s_{t+1})$. Instead, similar to \citet{banchio2022artificial}, we consider a simplified version without states which reduces the Q-table to a Q-vector. Our baseline implementation is also \textit{optimistic} Q-learning, where the Q-vector is initialized with values larger than the maximal possible discounted payoff. The purpose of such an initialization is to ensure that the algorithm won’t stop experimenting until all of the Q-values have sufficiently decreased. \citet{banchio2022artificial} argue that in their multi-agent setting, the advantage offered by optimism is a phase of experimentation at the beginning, which improves convergence. We find that the initialization is crucial and also explore alternative initializations in our experiments. 

\subsection{Extensions to Model Incomplete Information} \label{sec:pop_game}

The algorithms described so far were based on the model of complete-information games where the values of the bidders are publicly known. 
In practice, bidders do not have complete information, and algorithms need to take into account incomplete information.
This can be done by discretizing the type or valuation space $\Vcal^d := \{v_1, \dots, v_n \} \subset \Vcal$ of bidders with a corresponding prior distribution $F^d$. 
The discrete spaces can be constructed from the continuous game $\Gcal$ using a subset of equidistant points from the spaces and a numerical integration rule to derive a discretized prior $F^d$.
Overall, this defines a discrete incomplete-information game $\Gcal^d = (\Ical, \Vcal^d, \Acal^d, u, F^d)$ for the standard continuous Bayesian game as it is described in textbooks \citep{krishna2009auction}. Note that in practice every auction is discrete, because bidders can only submit bids up to a specific number of decimals.

The standard bandit algorithms are not defined for games with incomplete information. However, an interpretation as a population game \citep{hartline2015NoRegretLearningBayesian} allows for an extension that can be readily implemented. 
For this, we consider $N$ populations, where each population $i \in \Ical$ consists of $n = \vert \Vcal^d \vert$ players. All players within a population have different types $v \in \Vcal^d$. We write $v_i$ the player from population $i$ with valuation $v \in \Vcal^d$. In each round of the population game, nature draws a player from each population according to the prior $F^d$, i.e., $v_i \sim F^d$ for all populations $i \in \Ical$, who play against each other. Since this game is a complete-information game with finite actions, we can directly apply the algorithms described previously and learn a strategy for each $v_i$ of some population $i$.
Aggregating the strategies over all $v_i$ gives us a strategy for agent $i$ in the discrete incomplete-information game. Note that in this setting, the algorithms have to learn to perform well against the ``average agent'' of the opponents' populations.

\subsection{SODA} \label{sec:soda}
Let us also describe a gradient-based method, the SODA learning algorithm \citep{bichler2023soda}. 
This method provides us with an equilibrium strategy to compare against in situations where no analytical solution is available. If SODA converges to a strategy profile, it has to be an equilibrium. 

The SODA learning algorithm also acts on the discretized auction game $G^d$. The main idea is to use distributional strategies in the discretized auction game. Distributional strategies are an extension of mixed strategies in complete-information games to settings with incomplete information \citep{Milgrom1985}. 
They can be represented by matrices $s \in \R^{n \times m}$, where each entry $s_{ij}$ of this matrix represents the probability of a valuation-action pair $(v_i, b_j) \in \Vcal^d \times \Acal^d$. 
To be consistent with the auction game, the distributional strategy has to satisfy the marginal condition  $\sum_{j=1}^m s_{ij} = F^d(v_i)$, i.e., the probability of a value over all actions has to be equal to the probability of the value given by the prior. We denote the set of such distributional strategies by $\Scal$. 
Given a profile of such discrete distributional strategies, one can compute the expected utility $ \tilde u_i $ as sum over all outcomes weighted by the probabilities induced by $ s = (s_1,\dots, s_N) $.
If we consider an example with $N=2$ agents, we get an expected utility for agent 1 by
\begin{equation} \label{eq:exp_util_distr}
    \tilde u_1(s_1, s_2) = \sum_{i_1,j_1 = 1}^{n,m} (s_1)_{i_1j_1} \sum_{i_2,j_2 = 1}^{n,m} (s_2)_{i_2j_2} \, u_1(b_{j_1}, b_{j_2}, v_{i_1}),
\end{equation}
where $u_1(b_1, b_2, v_1)$ is the ex-post utility of agent 1. 

The expected utility, together with the set of distributional strategies, allows us to interpret the auction game as a complete-information game $ \Gamma = (\Ical, \Scal, \tilde u) $. 
Note that the set of discrete distributional strategies is a compact and convex subset of $ \R^{n \times m} $ and the expected utility is differentiable and linear in the bidder's own strategy. This allows us to rely on standard tools from online convex optimization to compute the NE of $ \Gamma $, which corresponds to a BNE of $ \Gcal^d $.
 
In our experiments, we use dual averaging \citep{nesterov2009PrimaldualSubgradientMethods} with an entropic regularization term (known as exponentiated gradient ascent), mirror ascent \citep{nemirovskij1983problem} with an Euclidean mirror map (standard projected gradient ascent), and the Frank-Wolfe algorithm \citep{frank1956}.
\begin{algorithm}[h]
    \SetAlgoNoLine
    \KwIn{Initial distributional strategy $s_1$, sequence of step sizes $\{\eta_t\}_{t=1}^T$}
    Initialize $y_{i,t} = 0$ for all $i \in \Ical$
    \For{$t = 1, 2, \dots, T$}
    {
        \For{each agent $i\in \mathcal I$}
        {
            observe gradient $c_{i,t} = \nabla_{s_i} \tilde u_i(s_{i,t},s_{-i,t})$\;
            update dual variable $y_{i,t+1} = y_{i,t} + \eta_t c_{i,t}  $\;
            update strategy $s_{i,t+1} = \argmin_{s \in \Scal^d} \Vert s - y_{i,t+1} \Vert_2^2 $\;
        }
    }
    \caption{Simultaneous Online Dual Averaging (SODA)}
    \label{algo:soda}
\end{algorithm}
Given the respective method, all agents simultaneously compute their individual gradients and perform the corresponding update step, as described in Algorithm \ref{algo:soda} for dual averaging. 
Specifically for dual averaging, one can show that if this procedure converges to a single point $s \in \Scal^d$, then this strategy profile is a NE in the approximation game, and thereby a BNE for the discretized auction game \citep[Corollary~1]{bichler2023soda}. Moreover, for some single-object auction formats, such as first- or second-price sealed-bid and all-pay auctions with quasi-linear utility functions, it is shown that if SODA finds an approximate equilibrium of the discretized auction game, this is also an approximate equilibrium of the continuous game \citep[Theorem~1]{bichler2023soda}. Therefore, SODA provides an ex-post certificate. 

Note that SODA serves only as a baseline as it allows us to approximate the BNE quickly and with high accuracy. It is not meant as an algorithm to simulate display ad auctions. While bandit algorithms such as Exp3 only require information about the price and whether they won or lost in an auction, in SODA the auctioneer would need all agents' mixed strategies to provide the respective feedback for the agents. 


\section{Results}
In this section we report the results of learning dynamics under different learning algorithms, different informational assumptions, and different utility models, to analyze revenue in repeated first-price auctions compared to their second-price counterparts.

\subsection{Experimental Design} \label{sec:exp_design}
In our experiments we simulate independent agents bidding in repeated first- and second-price auctions. Each experiment consists of $T$ iterations and is repeated several times. 
We focus on experiments with bandit feedback, where in each iteration $t \in \lbrace 1, \dots, T \rbrace$, agents submit their bid from some discrete action space $b_t \in \Acal^d \subset \Acal = [0,1]$, observe their utility (bandit feedback), and update their bidding strategies according to some learning algorithm. The objective of the agents is to maximize their utility, which depends on the bids of all agents and on their own valuation $v_t \in \Vcal^d \subset \Vcal = [0,1]$ from some discrete type space.

We will first discuss the \textit{complete-information model} where the agents have a fixed valuation $v_t=1$. We are interested in comparing the performance of different learning algorithms in the first- and second-price auction with payoff-maximizing agents, i.e., quasi-linear utility functions. We all experiments for $T=5\cdot 10^5$ iterations and repeat each 100 times.

In the \textit{incomplete-information model}, the valuations of each agent are drawn uniformly (i.i.d.) in each iteration, i.e., $v_t \sim U(\Vcal^d)$. Using the population game interpretation described in Section \ref{sec:pop_game}, we learn a separate strategy for each of the agent's valuations, which allows us to use the algorithms from the complete-information setting. 
Since each agent-value pair gets less feedback that way, we run these experiments for $T = 10^7$ iterations and repeat them 10 times. The experiments differ not only in the payment rule but also in the utility function of the agents, namely quasi-linear (QL), ROI, and ROSB. For ROI and ROSB to be well defined, we introduce a reserve price $r=0.05$, which we also use in the QL setting for better comparison. For ROSB, we assume a budget of $B = 1.01$ but provide sensitivity analyses in the appendix.

\subsection{Complete-Information Models} \label{sec:results_compl_info}

In our first set of experiments, we focus on the complete-information model by \citet{banchio2022artificial}.  

\paragraph{Setting} Two agents $i \in \{1,2\}$ bid in a first- and second-price auction. All agents value the item at $v = 1$ and choose bids from $\Acal^d = \{0.05, 0.10, \dots, 0.90, 0.95 \}$. 
For the first-price auction, it can be shown that with two agents and a random tie-breaking rule, the NE is to bid $b_1 = b_2 = 0.95$ or $b_1 = b_2 = 0.90$ for both of them. With more than 2 bidders, the NE is to only bid $b_i =0.95$ for all $i \in \{1, \dots, N\}$. In the second-price auction, the unique NE is $b_1 = b_2 = 0.95$.

\paragraph{Results}
Below, we report on experiments with Exp3 and Q-learning with the parameters as defined in Table \ref{tab:param_compl_info}.

\begin{table}[h]
    \caption{Parameters for Q-Learner and Bandit Algorithm.}
    \begin{center}
        \begin{tabular}{ l r r r r r r }
            \toprule
            Learner & Init. & $\epsilon$ & $\beta$ &  $\gamma$ & $\alpha$ & $Q_0$ \\
            \midrule
            Optimistic (Baseline) & 100 & 0.025 & 0.0002 & 0.99 & 0.05 & 100 \\
            Optimistic \& small $\gamma$ & 100 & 0.025 & 0.0002 & 0.25 & 0.05 & 100 \\
            Optimistic \& fix $\varepsilon$ & 100 & 0.025 & 0.0 & 0.99 & 0.05 & 100 \\
            Zero & 0 & 0.025 & 0.0 & 0.99 & 0.05 & 0 \\
            Exp3 & uniform & 0.01 & 0.0 & - & 0.01 & - \\
            \bottomrule
        \end{tabular}
    \end{center}
    \label{tab:param_compl_info}
    \footnotesize
    Optimistic (Baseline) corresponds to the parameters from \citet{banchio2022artificial}.
\end{table}

First, we consider two symmetric agents using variations of the Q-learner considered by \citet{banchio2022artificial}, which serves as a baseline. \citet{banchio2022artificial} used an optimistic initialization for their learners, which we replicate by setting the initial Q-values to $\tfrac{1}{1-\gamma}$. Optimistic initialization means that at the start for every state-action pair, the value of Q is larger than the maximum payoff it could ever be achieved. 
We can reproduce their results but observe that parameter changes lead to higher prices compared to this baseline Q-learner (see Figure \ref{fig:q_learner}).

\begin{result}
    The collusive outcomes by using Q-Learning in the complete-information setting from \citet{banchio2022artificial} rely on the parameters of the Q-values. We find evidence that optimistic initialization enables collusive outcomes while lower initializations or lower discount rates lead to higher prices closer to the equilibrium. 
\end{result}

Without the decreasing exploration rate (fix $\varepsilon$), we see cyclic behavior where the Q-learner often returns to low prices but fails to converge to a fixed policy. 
Even more interesting is the observation that initializing the Q-values with zero (and fixed exploration rate) instead of optimistic leads to a completely different outcome where the learner converges to the Nash equilibrium. 
One possible explanation is that collusive prices are more stable for higher Q-values since the rewards are much higher. 
This way, the learning algorithm might get stuck with lower bids. 
Higher actions are played if we remove this hidden preference for low actions by initializing them differently. 
It is also interesting to note that if we stick to optimistic initialization but have a small discount factor $\gamma$, the learners also converge to higher prices closer to the equilibrium. 
Revenue also increases again when agents use different parameters (see Figure \ref{fig:compl_info_fp_revenue_a}) or compete with more bidders.

\begin{figure}[h]
	\begin{center}
	\begin{subfigure}{0.45\textwidth}
		\includegraphics[width=0.9\textwidth]{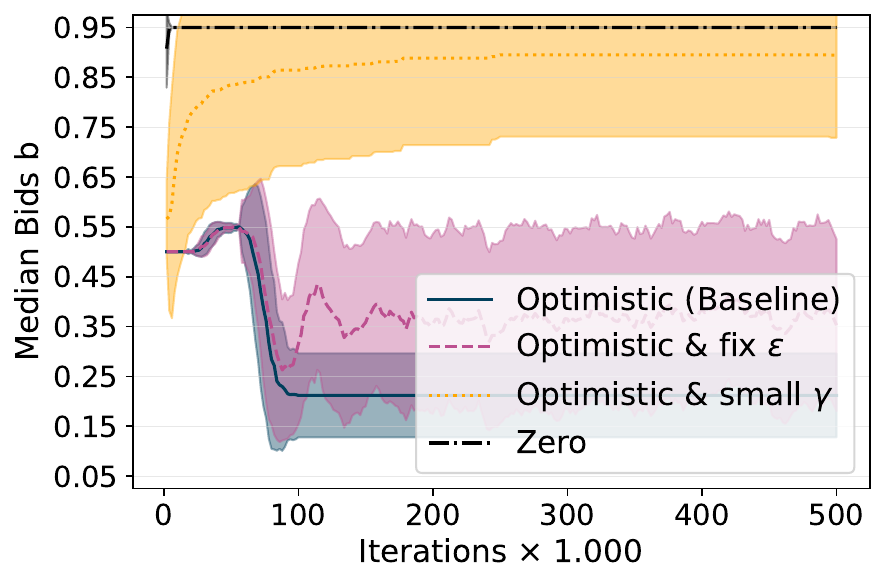}
		\caption{Q-Learner vs. Q-Learner (Symmetric Agents)}
	\end{subfigure}
	\begin{subfigure}{0.45\textwidth}
		\includegraphics[width=0.9\textwidth]{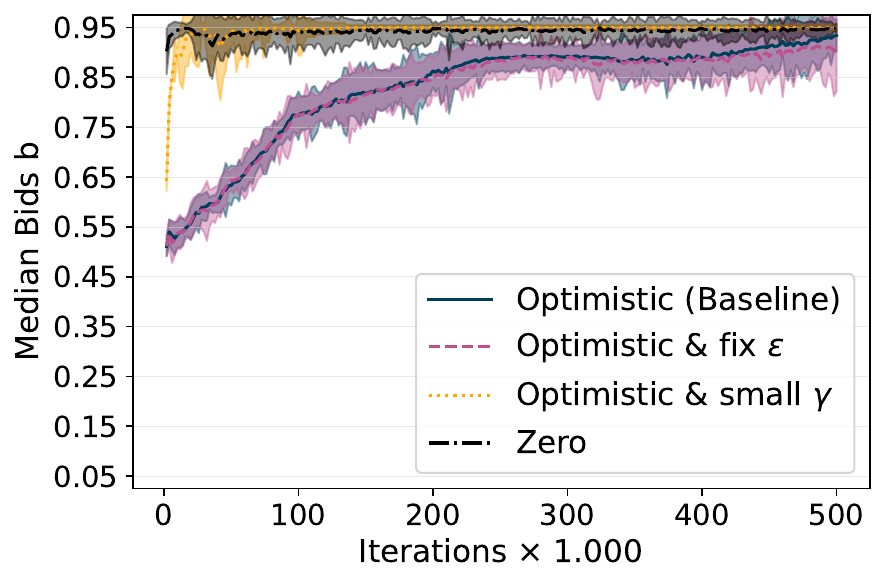}
		\caption{Q-Learner vs. Exp3 (Asymmetric Agents)}
	\end{subfigure}
	\caption{Median Bids in the FPSB Auction with Two Agents using Different Q-Learners}
	\label{fig:q_learner}
	\end{center}
	\footnotesize
	We split the learning process into intervals of two thousand iterations and compute the median bid of an agent for each interval. The shaded area shows the mean $ \pm $ std of these median bids over 100 runs, i.e., repetitions of the experiment. In the first plot, we show experiments for different parameters of the Q-learner used by both agents (symmetric). In the second plot, we show the results for agents using different learning algorithms (asymmetric), i.e., Q-learner with different parameters against an agent using Exp3. Note that we report the median instead of the average bid since the exploration makes the average bid harder to interpret, especially when we converge to high bids.
\end{figure}

\begin{result}
    If we use Exp3 and Q-learning algorithms, Q-learning becomes more robust and agents bid significantly higher compared to the symmetric case with identical Q-learners. Similar results can be observed when other combinations of bandit algorithms compete in a first-price auction.
\end{result}

The second plot in Figure \ref{fig:q_learner} shows additional experiments where the second agent uses the Exp3 algorithm, which we use as an example for no-regret bandit algorithms. If both agents use Exp3, they converge to a Nash equilibrium quickly. If Q-learners compete against Exp3, we observe significantly higher bids compared to the symmetric settings in the first plot. The Q-learner converges to a Nash equilibrium, even though it takes more iterations.

While we focused on Exp3 so far, similar results hold for \textit{other popular bandit algorithms} such as $\varepsilon$-Greedy, Thompson Sampling, and UCB1. Most experiments with different combinations of algorithms converged to the Nash equilibrium. Only UCB1 and $\varepsilon$-Greedy can lead to prices below the pure Nash equilibria for some Q-Learners (Figure \ref{fig:compl_info_fp_revenue}).

\begin{figure}[h] 
	\begin{center}
		\begin{subfigure}{0.45\textwidth}
			\includegraphics[width=0.9\textwidth]{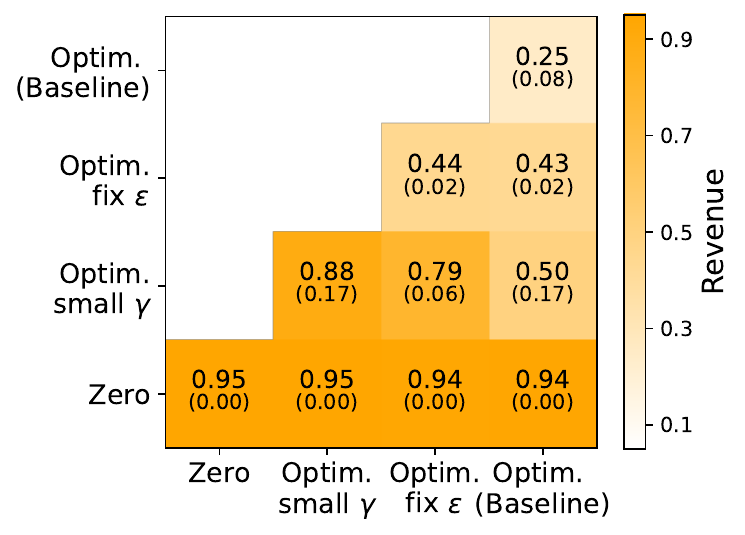}
			\caption{Q-Learner vs. Q-Learner}
			\label{fig:compl_info_fp_revenue_a}
		\end{subfigure}
		\begin{subfigure}{0.45\textwidth}
			\includegraphics[width=0.9\textwidth]{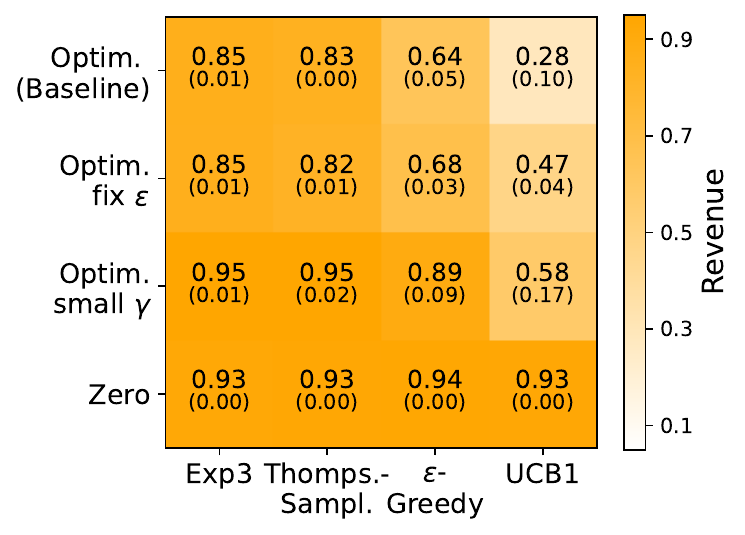}
			\caption{Q-Learner vs. Bandit}
			\label{fig:compl_info_fp_revenue_b}
		\end{subfigure}
		\caption{Revenue in the FPSB Auction with Two Agents Using Different Learning Algorithms}
		\label{fig:compl_info_fp_revenue}
	\end{center}
	\footnotesize
	We compute the average revenue over the last $10^5$ iterations (excluding an initial learning phase of $4 \cdot 10^5$ iterations) and report the mean and standard deviation (number in parentheses) over 100 runs of these experiments. 
    Each field represents the mean revenue in the FPSB auction with two competing agents using the respective learning algorithm indicated on the x- and y-axis. Revenues within the range of $[0.90, 0.95]$ are expected, if agents play according to the Nash equilibrium of the stage game.
\end{figure}
A short introduction to the additional bandit algorithms, together with experiments, can be found in Appendix \ref{app:bandit}.

\subsection{Incomplete-Information Model}
In the following experiments, we make the assumption that agents compete against a distribution of bidders with different valuations and study this Bayesian setting under different utility models. As described in Section \ref{sec:pop_game}, this corresponds to the classical incomplete-information setting known from auction theory.

\subsubsection{Equilibria Learned with SODA} \label{sec:results_soda}
Before we analyze the behavior of bandit algorithms in this setting, we use SODA on a finer discretization to get a good approximation of the equilibrium strategies since analytical solutions are not available for all settings (see discussion in Section \ref{sec:equi}).

\paragraph{Setting} 
The action and type space $\Acal = \Vcal = [0,1]$ of the $N=2$ agents are discretized using $n=m=64$ equidistant points. We consider the first- and second-price auction under different utility models, namely QL, ROI, and ROSB maximizing agents and different variations thereof.

\paragraph{Results}
To measure the quality of computed strategies, we use the \textit{relative utility loss $\ell$} as a metric. Given a distributional strategy $s_i$, $\ell$ measures the relative improvement of the expected utility we can get by best responding to the other agents strategies $s_{-i}$, i.e., 
\begin{equation} \label{eq:discr_util_loss}
        \ell(s_i, s_{-i}) = 1 - \dfrac{\tilde u_i(s_i^{br}, s_{-i})}{\tilde u_i(s_i, s_{-i})}.
\end{equation}
Note that in the discretized game, the best response $s_i^{br}$ is the solution of a simple LP. If the absolute utility loss $\ell \cdot \tilde u_i(s_i, s_{-i}) < \varepsilon$ for all agents, the computed strategy profile is a $\varepsilon$-BNE in the discretized auction game. 
In our experiments, we observe that on average $\ell \leq 10^{-3}$ in all instances, except for the asymmetric setting, where the relative utility loss is slightly higher for some combinations. Moreover, we can show that we closely approximate the analytical equilibrium strategies for QL and ROI (see Table \ref{tab:results_soda} in Appendix \ref{app:soda}).

\begin{figure}[h]
    \begin{center}
        \begin{subfigure}{0.32\textwidth}
            \includegraphics[width=0.9\textwidth]{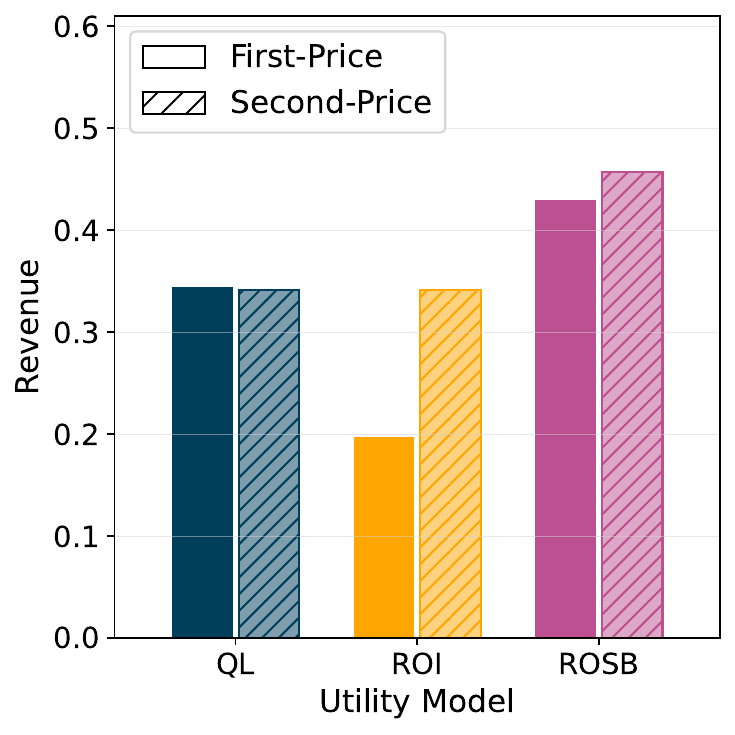}
            \caption{2 Bidders, Uniform Prior}
            \label{fig:soda_a}
        \end{subfigure}
        \begin{subfigure}{0.32\textwidth}
            \includegraphics[width=0.9\textwidth]{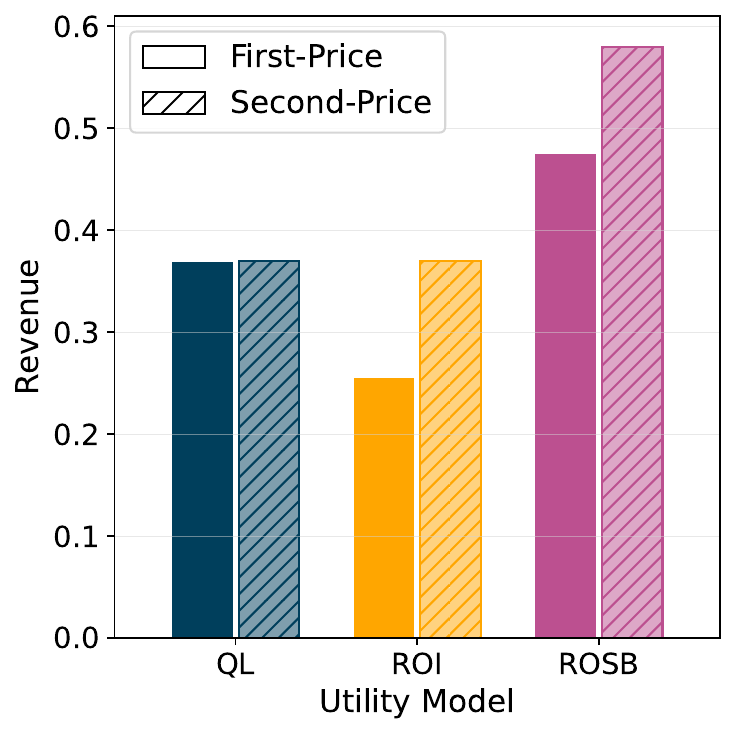}
            \caption{2 Bidders, Gaussian Prior}
        \end{subfigure}
        \begin{subfigure}{0.32\textwidth}
            \includegraphics[width=0.9\textwidth]{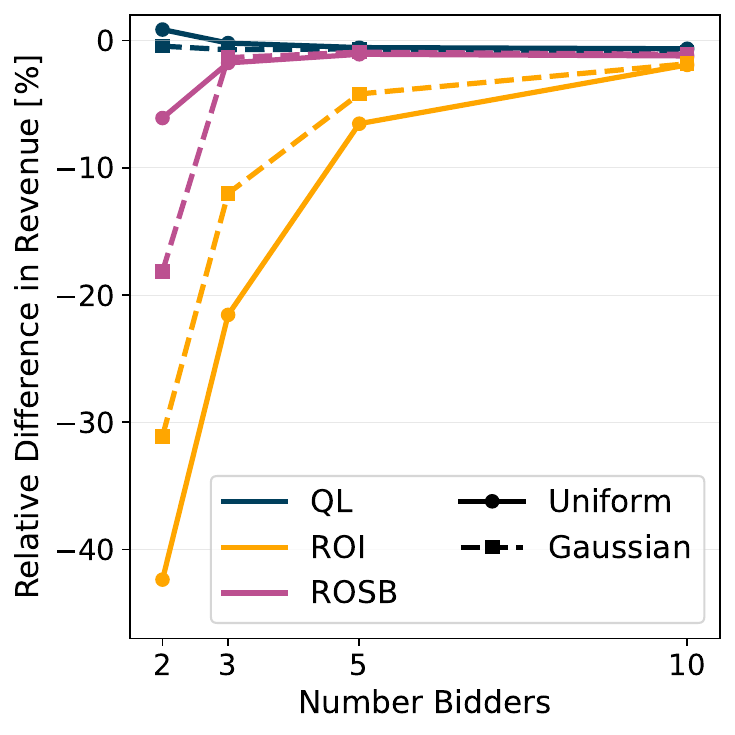}
            \caption{N Bidders}
        \end{subfigure}
        \caption{Expected Revenue in the FPSB and SPSB Auction with Different Utility Models.}
        \label{fig:soda}
    \end{center}
    \footnotesize
    We use the computed equilibrium strategies from SODA to compare the expected revenue between the first- and second-price auction for different utility models and prior distributions. In the first plot, the valuations are distributed uniformly, while we use a truncated Gaussian ($\mu=0.5, \, \sigma=0.3$) prior for the second plot. In the last plot, we visualized the relative revenue loss when switching from a second-price (SP) to a first-price (FP) auction, i.e., Revenue(FP)/Revenue(SP)-1 for both priors and different numbers of agents.
\end{figure}

Using the computed equilibrium strategies, we can simulate auctions to approximate the expected revenue. To that end, we sample $2^{22}$ valuations and the corresponding bids according to the strategies for each agent. This allows us to simulate auctions and approximate the expected revenue. We report the average expected revenue over 10 runs of the experiment. 

An important insight from this analysis (see Figure \ref{fig:soda}) is that revenue equivalence breaks for return-on-invest and return-on-spend (with budget) maximizing agents. The number of bids in ad auctions can vary significantly, but understanding this difference in auction formats is important for advertisers and ad exchanges alike. 


\begin{result}
    For ROI and ROSB the revenue in first-price auction is lower compared to the second-price auction in equilibrium. The difference in revenue shrinks with increasing levels of competition.
\end{result}
Note that this result is robust with respect to different variations of these utility models, i.e., convex combinations of ROI and ROIS or different budgets for all settings (see Appendix \ref{app:payoff_robustness}). 
Furthermore, we observe similar results for asymmetric agents, i.e., agents with different utility functions. 
\begin{figure}[h]
    \begin{center}
        \begin{subfigure}{0.45\textwidth}
            \includegraphics[width=0.9\textwidth]{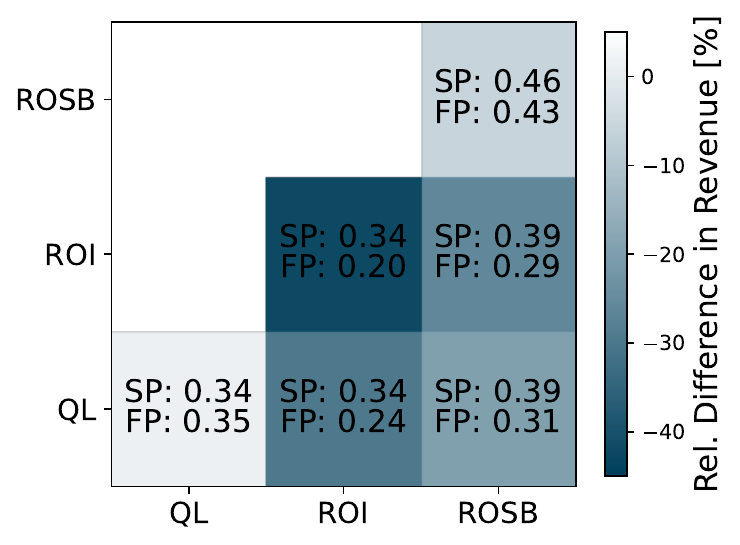}
            \caption{Uniform Prior}
        \end{subfigure}
        \begin{subfigure}{0.45\textwidth}
            \includegraphics[width=0.9\textwidth]{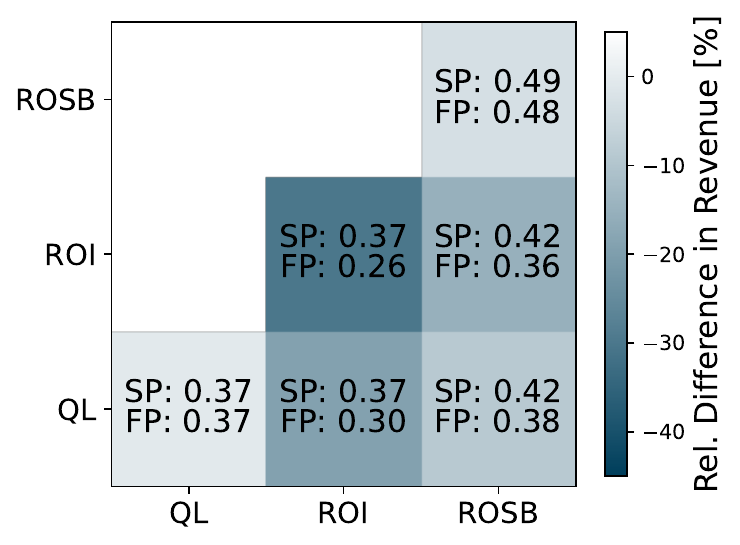}
            \caption{Gaussian Prior}
        \end{subfigure}
        \caption{Expected Revenue in the FPSB and SPSB Auction with Two Asymmetric Agents.}
        \label{fig:soda_asym}
    \end{center}
    \footnotesize
    We use the computed strategies from SODA to compare the expected revenue between the first- and second-price auction for agents with different utility models. The x-axis corresponds to the utility function of agent 1 and the y-axis to agent 2. The numbers denote the revenue in the first-price (FP) and second-price (SP) auction format for the respective combination of utility functions, while the color denotes the relative difference in revenue when switching from a second- to a first-price auction format. 
\end{figure}

We considered the expected revenue in first- and second-price auctions with two agents maximizing their payoff, return-on-invest, or return-on-spend (with budget). The revenues, as well as the relative differences between second- and first-price auctions, are visualized in Figure \ref{fig:soda_asym}. 
Interestingly, if one agent maximizes payoff but the other agent does not, the revenue can still decrease by 30\% (ROI) or 20\% (ROSB) when switching from a second-price to a first-price auction format with a uniform prior. The results also apply, albeit less severely, for these settings with a truncated Gaussian prior ($\mu = 0.5, \sigma = 0.3$).

\subsubsection{Bandit Algorithms} \label{sec:bandit_bayesian}
Real-world bidding agents only get bandit feedback. Therefore, we also analyze the revenue in the first-price auction with bandit algorithms (e.g., Exp3) in the ROI and the ROSB utility models. Since learning is much slower for learners with bandit feedback compared to SODA, we use a coarser discretization similar to the complete information setting.

\paragraph{Setting} We discretize the type and action space with $\Vcal^d = \Acal^d = \{0, 0.05, \dots, 0.95, 1\}$ which corresponds to $n=m=21$ discretization points. We focus on repeated first- and second-price auctions with $N=3$ agents\footnote{In this coarse discretization, bidding the reserve price for all valuations is the BNE for ROI and ROSB maximizing agents if we only consider $N=2$ agents. Therefore, we choose $N=3$.} and different assumptions on the utility function. As described in Section \ref{sec:pop_game}, each agent is represented by a population, which consists of different valuations. This means each agent (i.e. population) has $n=21$ different instances of the bandit learner, i.e., one for each valuation. To get comparable results to the complete information, we increase the number of iterations approximately by the number of discrete values, which leads to $10^7$ iterations instead of $5 \cdot 10^5$. 
In our experiments, we use Exp3 with a fixed exploration rate $\varepsilon=0.01$ and a fixed learning rate $\alpha=0.01$. For ROI and ROSB, we use a smaller learning rate $\alpha=0.0005$ as the magnitude of the utilities is larger. 

\paragraph{Results}

First, we observe that the revenues generated by the agents using Exp3 show a similar pattern to the expected revenues in equilibrium (see previous subsection). 

\begin{result}
    If we use Exp3 in the incomplete-information auction setting, the agents learn to bid according to the BNE and revenues for the first-price auction are lower compared to the second-price auctions for ROI and ROSB maximizing agents, as predicted by the equilibrium analysis.
\end{result}

We see that revenue equivalence does not hold for ROI and ROSB maximizing agents and especially for ROI maximizing agents, the revenue for the first-price auction is significantly lower compared to the second-price auction. 
Furthermore, the average revenue matches the values predicted by the BNE strategies in the discretized setting computed using SODA.
\begin{figure}[h]
    \begin{center}
        \begin{subfigure}{0.32\textwidth}
            \includegraphics[width=0.9\textwidth]{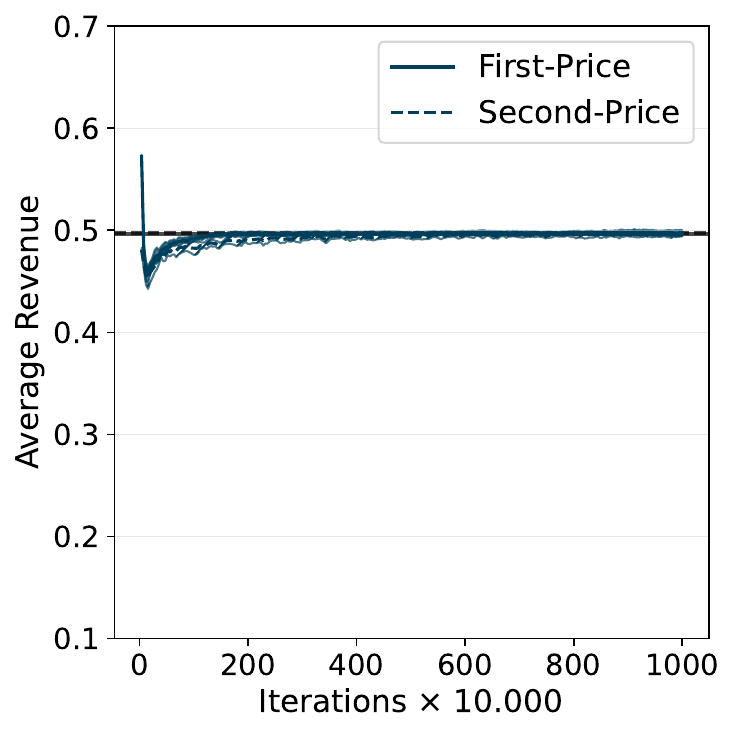}
            \caption{Payoff-Maximizing}
        \end{subfigure}
        \begin{subfigure}{0.32\textwidth}
            \includegraphics[width=0.9\textwidth]{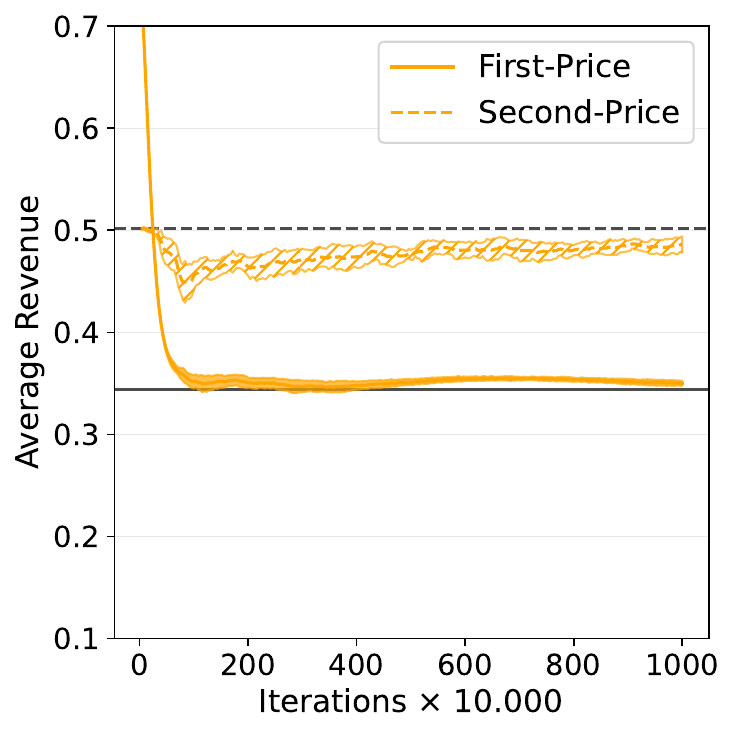}
            \caption{ROI-Maximizing}
        \end{subfigure}
        \begin{subfigure}{0.32\textwidth}
            \includegraphics[width=0.9\textwidth]{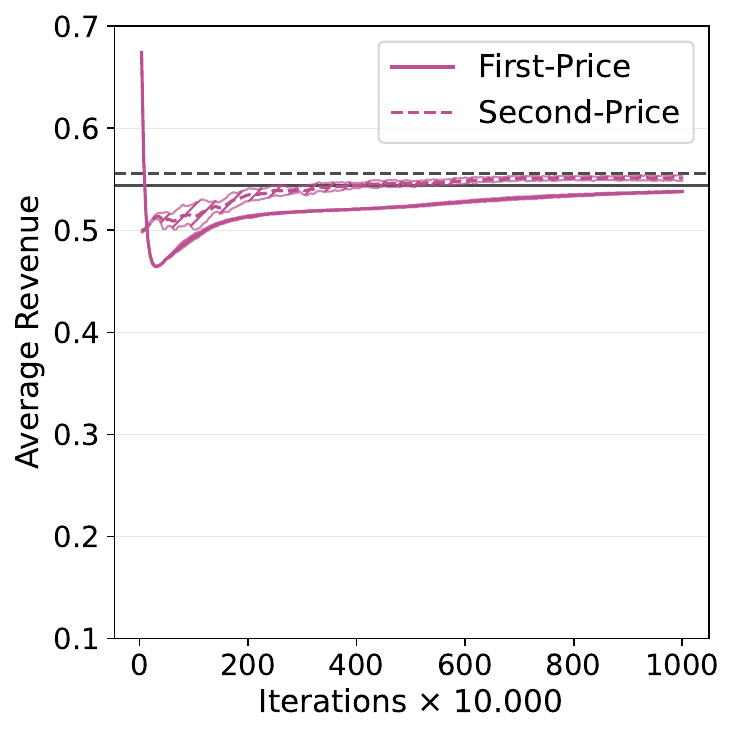}
            \caption{ROSB-Maximizing}
        \end{subfigure}
        \caption{Revenue in the FPSB and SPSB Auction with Three Bidders using Exp3 and Uniform Prior.}
        \label{fig:exp3_incomplete}
    \end{center}
    \footnotesize
    We run Exp3 for $10$ million iterations and compute the average revenue for all $40\thinspace000$ iteration intervals for payoff, ROI and ROSB maximizing agents. We plot the mean and standard deviations of this average revenue per interval over 10 runs. The black horizontal lines denote the expected revenue we would get with the BNE strategies computed using SODA.
\end{figure}

Not only is the average revenue close to the predictions from the equilibrium analysis, but we can also observe that the agents actually learn to play according to the Bayes-Nash equilibrium using a bandit algorithm such as Exp3. 
To this end, we look at the frequency of the bids within an interval and compare it to the distributional equilibrium strategy computed using SODA. 
As we can see in Figure \ref{fig:exp3_ql} - \ref{fig:exp3_roi}, the empirical frequency of actions played by Exp3 converges to the equilibrium strategies. 
More detailed results to the experiments, including different bandit learners and Q-learning, can be found in the Appendix \ref{app:bandit_bayesian}.

\begin{figure}[h]
	\begin{center}
    \includegraphics[width=1\textwidth]{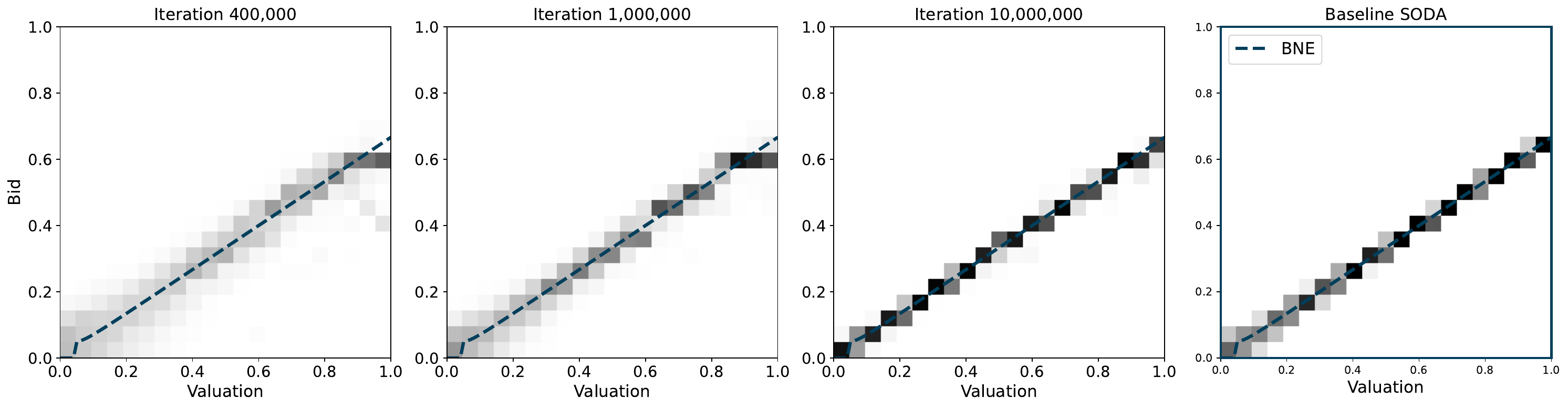}
	\caption{Strategies for Payoff-Maximizing Agents using Exp3 in the FPSB Auction with Uniform Prior}
	\label{fig:exp3_ql}
	\end{center}
	\footnotesize
	We run Exp3 for $10$ million iterations and visualize the frequency of the last $40\thinspace000$ bids for the respective valuations after $0.4$, $1$, and $10$ million iterations. On the last plot, we show the distributional equilibrium strategy computed with SODA. The colored lines denote the BNE in the continuous setting.
\end{figure}

\begin{figure}[h]
	\begin{center}
    \includegraphics[width=1\textwidth]{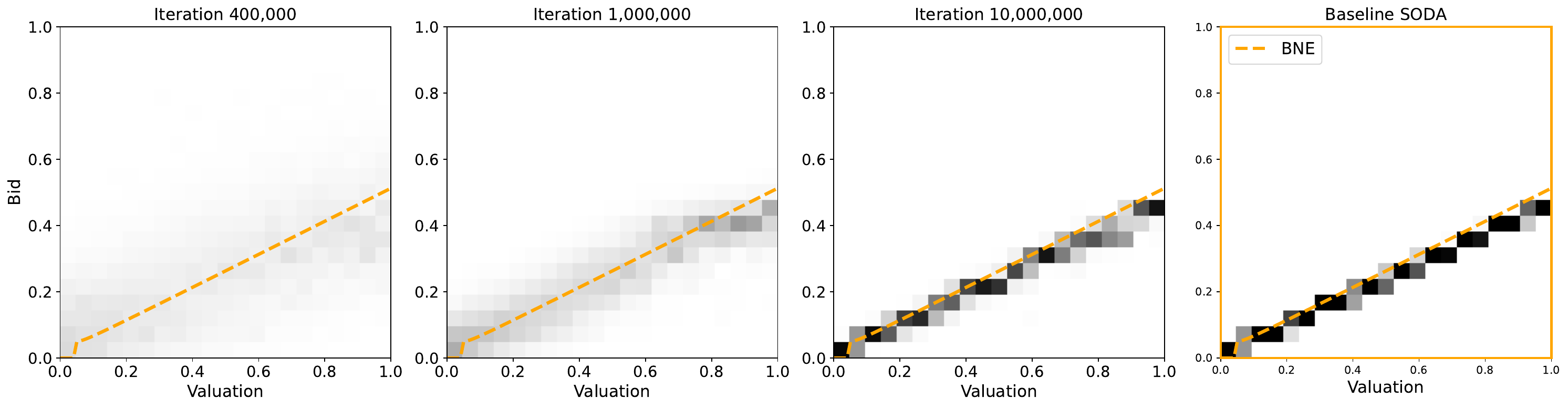}
	\caption{Strategies for ROI-Maximizing Agents using Exp3 in the FPSB Auction with Uniform Prior}
	\label{fig:exp3_roi}
	\end{center}
	\footnotesize
	We run Exp3 for $10$ million iterations and visualize the frequency of the last $40\thinspace000$ bids for the respective valuations after $0.4$, $1$, and $10$ million iterations. On the last plot, we show the distributional equilibrium strategy computed with SODA. The colored lines denote the BNE in the continuous setting.
\end{figure}

\section{Conclusions}
In this paper, we study the impact that a move from second-price to first-price auctions can have on the revenue of an ad exchange that faces learning bidding agents. While some exchanges report a revenue loss, the econometric analysis is difficult due to numerous confounding variables and changes in the market in recent years. Analytical models allow us to study the causal impact of such a policy change under ceteris paribus assumptions. However, standard models from auction theory do not adequately reflect the specifics of this market. 

First, auctions are conducted in milliseconds and the bidding is fully automated. Technically, multi-armed bandit algorithms are natural candidates for such tasks as they are fast and can adapt quickly to demand and supply in such environments. As a result, there is a growing literature on the use of such algorithms for bidding in online auctions. We want to understand the outcomes of such learning algorithms and whether they converge to the equilibrium strategies in a model or not. 

Second, ROI and ROS are widely used as objectives for Demand-Side Platforms, which is due to the fact that they serve as agents for an advertiser (the principal). They get some budget assigned and they are tasked to use this budget most effectively. As a result, ROI is a wide-spread metric to evaluate Demand-Side Platforms, and this is what they optimize. 
In the standard independent private values model in auction theory where payoff-maximizing bidders play their equilibrium bidding strategy, both auction formats can be shown to have the same revenue in expectation. This is not necessarily the case if the bidders maximize ROI. 

We show that wide-spread bandit algorithms all converge to equilibrium in complete- and incomplete-information models, with few exceptions. If the algorithms are mixed, we see convergence throughout. In practice, there might be changes in supply and demand over time, but if we do not observe algorithmic collusion in this model, it is unlikely to emerge in a more dynamic environment. These results are important in different ways. 
First, learning algorithms can cycle or even lead to chaotic dynamics in games. The fact that they converge to equilibrium in auction games without knowing the prior distribution a priori is important and makes game-theoretical equilibrium predictions very credible. While there is literature on learning in games, auctions have not been covered in this literature so far, and our paper provides new results to this literature stream. 
Second, our results suggest that if bidders maximize payoff in this model, then we can rely on the revenue equivalence theorem and the switch from the second- to the first-price auction should have been without loss for publishers and ad exchanges. 

The second feature of display ad auctions, i.e., the observation that ROI or ROS are widely used as objectives, leads to significant difficulties in equilibrium analysis. The equilibrium problem in auctions can be described as a system of non-linear differential equations, for which no exact solution theory is known. For ROI bidders, we can still derive an equilibrium bidding strategy for uniform distributions. For ROS bidders, even this is impossible. We draw on recent advances in equilibrium learning and compute equilibria for such auction models numerically. 
While these equilibrium learning algorithms allow us to certify an approximate equilibrium, we can show that also the bandit algorithms that are relevant in the practice of real-time bidding converge to equilibrium. Importantly, we show that in equilibrium, the revenue of the first-price auction is consistently lower than that of the second-price auction, independent of whether the agents are ROI or ROS maximizers. This is important to understand for publishers and ad exchanges alike. Overall, the move to first-price auctions was probably the most important policy change and the consequences for revenue are a fundamental and open problem.

As any other economic model, ours is also an abstraction from real-world practices. First, the bandit algorithms that we use in our work might differ from those implemented in practice. Apart from our own experience in working with Demand-Side Platforms, there is evidence in the literature for the use of bandit algorithms (see Section \ref{sec:intro}). Yet, the algorithms might have to consider additional detail in how they deal with different types of impressions and users or how they react to changes in supply and demand. Our model abstracts from these details, but we find that a large variety of algorithms converges to equilibrium, which is far from obvious. In particular, combinations of algorithms, the most likely scenario in the field, always converge. Our goal is not to replicate real-world exactly, which is impossible as companies will not reveal the details of their algorithms, but to see if under reasonable assumptions algorithmic collusion or other off-equilibrium outcomes emerge or not. The robustness of the results against such a large variety of bandit algorithms is remarkable. 

Second, also the objectives of the bidders in display ad auctions are confidential. Yet, there is a large literature suggesting that Demand-Side Platforms maximize ROI or ROS rather than payoff. Also here, we find a surprisingly robust result showing that agents maximizing ROI, ROS or combinations thereof lead to revenues in the first-price auction that are below those of the second-price auction. Importantly, learning agents converge to these equilibria as well. These results contribute to a central managerial question in the analysis and design of display ad auctions, but the techniques developed within this project also have implications beyond.

\section*{Acknowledgments}
This project was funded by the Deutsche Forschungsgemeinschaft (DFG, German Research Foundation) - GRK 2201/2 - Project Number 277991500 and BI 1057/9. The authors would also like to thank Laura Mathews, Markus Ewert and Fabian R. Pieroth for their support.



\bibliographystyle{informs2014} 



\newpage
\begin{appendix}
	\section{Additional Experiments for the Complete-Information Model} \label{app:bandit}
Additional to Exp3, we also considered other bandit algorithms to check for robustness of our observations.
We focus on three well-known bandit learners that handle the exploration-exploitation tradeoff differently (see for instance \cite{sutton2018ReinforcementLearningIntroduction, lattimore2020bandit}).
\begin{description}
    \item[$\varepsilon$-Greedy] This policy compares the current average reward of each action $a$, which is defined by
    \begin{equation}
        \hat R(a) = \dfrac{1}{N(a_t)}\sum_{s=1}^t \chi_a(a_t) r_s(a_t)
    \end{equation}
    with $\chi_a(a_t) = 1 $ if $a_t = a$ and $0$ else. 
    $N_t(a)$ denotes the number an action $a$ has been played up to iteration $t$.
    The $\varepsilon$-Greedy selects the action with the highest average reward with probability $1-\varepsilon_t$ and some random action with probability $\varepsilon_t$ ( $\varepsilon_t = 0.05$ in our experiments).
    \item[UCB1] Instead of exploring actions at random, UCB1 explores only actions with enough potential or uncertainty (upper-confidence-bound action selection). This is achieved by taking the action with the maximal value which consists of the average reward and an exploration bonus depending on the number an action has been visited. This value is given by
    \begin{equation}
        \hat R(a) + c \left(\frac{\ln t}{N_t(a)}\right)^{\tfrac{1}{2}},
    \end{equation}
    where $c$ is some exploration parameter ($c=4$ in our experiments).
    \item[Thompson-Sampling] Finally, we also use a Bayesian method first introduced in \citep{thompson1933}. Thompson sampling assumes a distribution $\Pcal_r(\theta_a)$ (with some parameter $\theta_a$) over the rewards of each action. In each iteration $t$, the learners sample parameters $\theta_a^t \sim \Pcal_\theta$ from some parameter distribution for each action $a$, play the action $a^*$ that maximizes the expected reward given the distribution $\Pcal_r(\theta_a^t)$, and update the parameter distribution $\Pcal_\theta$ for action $a^*$ given the observed reward $r_t(a^*)$.
    In our implementation, we assume that the rewards for each action follow a Gaussian distribution (with known variance $\sigma$) and thereby use a Gaussian prior for the parameter $\theta_a = \mu_a$ (conjugate prior).
\end{description}

\begin{figure}[h!]
    \begin{center}
        \begin{subfigure}{0.45\textwidth}
            \includegraphics[width=0.9\textwidth]{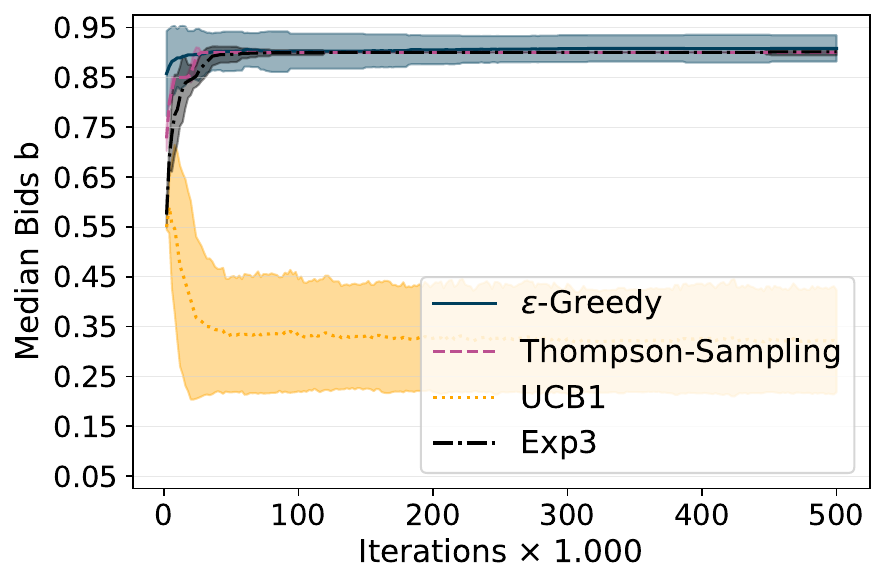}
            \caption{Bandit vs. Bandit (Symmetric Agents)}
        \end{subfigure}
        \begin{subfigure}{0.45\textwidth}
            \includegraphics[width=0.9\textwidth]{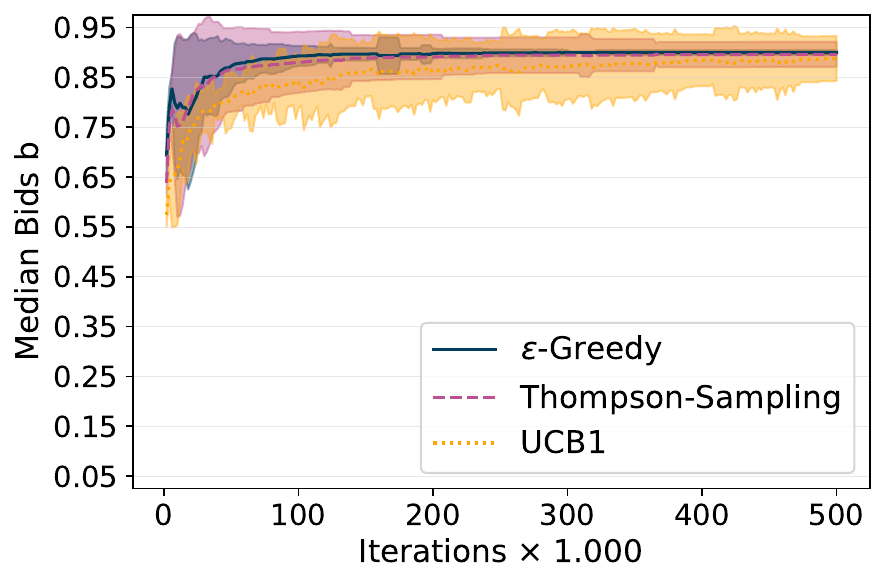}
            \caption{Bandit vs. Exp3 (Asymmetric Agents)}
        \end{subfigure}
        \caption{Median Bids in the FPSB Auction with Two Agents using Different Bandit Algorithms.}
        \label{fig:bandit}
    \end{center}
    \footnotesize
    We split the learning process of an agent into intervals of $2\thinspace000$ iterations and compute the median bid for each interval. The shaded area shows the mean $ \pm $ std of these median bids over 100 runs. The plots show the experiments for different bandit algorithms used by both agents (symmetric) and playing against an agent using Exp3 (asymmetric).
\end{figure}

Repeating the experiments with two players in a complete-information first-price sealed bid auction (Section \ref{sec:results_compl_info}), we observe that most of them converge to the Nash equilibrium (Figure \ref{fig:bandit}). 
The exception here is UCB1, which shows similar behavior to the Q-learner from \citet{banchio2022artificial}. 
But again, when playing against another bandit algorithm, such as Exp3, the bids are significantly higher.

If we compare the revenue under different payment rules for the Q-learner and different bandit algorithms (assuming symmetric agents), we can confirm the observations by \citet{banchio2022artificial} that the revenue in the first-price auction is significantly lower than in the second-price auction for their Q-Learner with optimistic initialization. 
But, our experiments further indicate that this might be specific to this learning algorithm. 
Overall, we observe that the first-price auctions might be harder to learn for some algorithms (i.e., some instances of Q-learning and UCB1). 
And since a failure to learn the Nash equilibrium automatically leads to lower revenue, these methods might seem to collude. But in fact, other versions of Q-learning (i.e., with zero initialization) show opposite results, and a number of reasonable bandit algorithms converge in both settings. 
Note that the slightly lower revenue (see Figure \ref{fig:compl_info_revenue}) in the first-price auction compared to the second-price auction for $\varepsilon$-Greedy, Thompson-Sampling, and Exp3 can be explained by the additional Nash equilibrium at $0.90$ for the first-price auction.
\begin{figure}[h]
    \begin{center}
        \begin{subfigure}{0.45\textwidth}
            \includegraphics[width=0.9\textwidth]{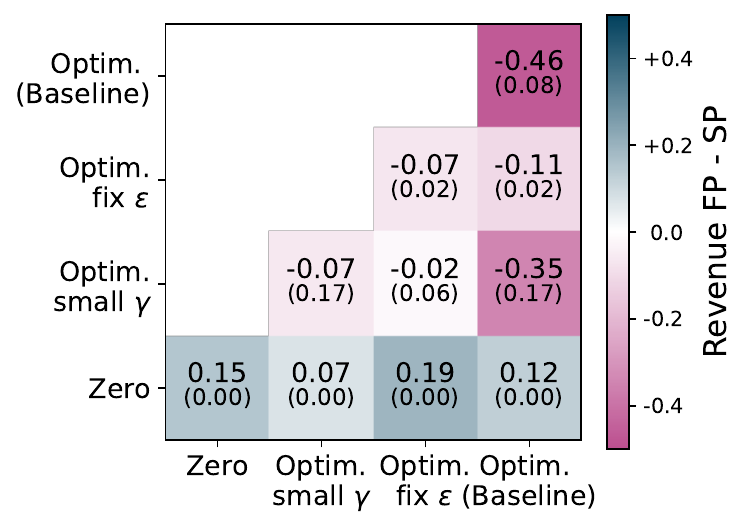}
            \caption{Q-Learner vs. Q-Learner}
        \end{subfigure}
        \begin{subfigure}{0.45\textwidth}
            \includegraphics[width=0.9\textwidth]{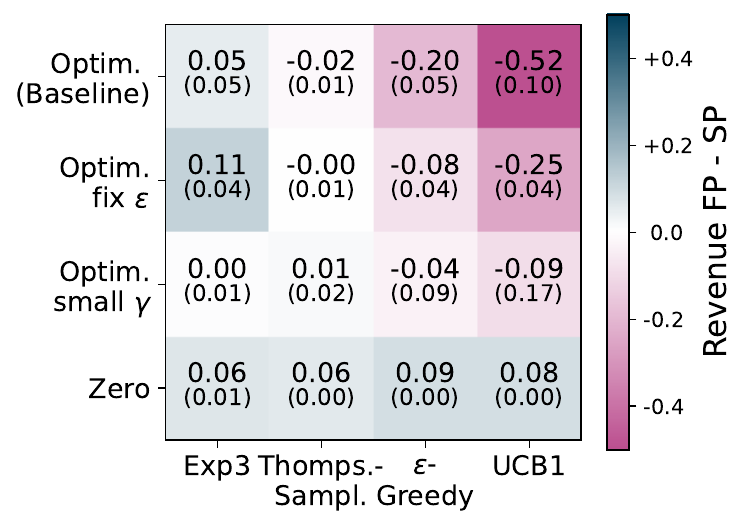}
            \caption{Q-Learner vs. Bandit}
        \end{subfigure}
        \caption{Difference in Revenue between the FPSB and SPSB Auction with Two Agents.}
        \label{fig:compl_info_revenue}
    \end{center}
    \footnotesize
    We compute the difference of average revenue between the first- and second-price auction over the last $10^5$ iterations (excluding an initial learning phase $4 \cdot 10^5$ iterations) and report the mean and standard deviation (number in parentheses) over 100 runs of these experiments. Each field represents the mean difference revenue two competing agents using the respective learning algorithm indicated on the x- and y-axis. Differences within the range of $[-0.05, 0.0]$ are expected, if agents play according to the Nash equilibrium of the stage games.
\end{figure}

\section{Additional Experiments for the Incomplete-Information Model} \label{app:incomplete_info}
\subsection{Evaluation of SODA against Analytical Equilibrium Strategies} \label{app:soda}
We applied SODA to the first- and second-price auctions with different utility models and quickly converge to a pure-strategy BNE. 
As described in Section \ref{sec:soda}, SODA acts on a discretized version we construct by discretizing the valuation and action space using $n=m=64$ equidistant points. 
Due to better performance, we use a tie-breaking rule in which both agents lose in ties for this discretization. The strategies are updated in each iteration using dual averaging with the entropic regularization term and a decreasing step size $\eta_t = 10 \cdot t^{-0.05}$. We stop the learning algorithm after $5\thinspace000$ iterations, or whenever the relative utility loss $\ell$ of all agents in the discretized game is less than $10^{-4}$. Each experiment is repeated $10$ times. To verify the computed strategies, we compare them to the analytical solutions in the settings where the BNE is known. To this end, we report the following metrics (similar to \cite{bichler2021npga, bichler2023soda}):
\begin{description}
    \item \textit{Relative Utility Loss} $\Lcal$. We approximate the expected utility using the sample-mean of the ex-post utilities, i.e., $ \tilde u_i (\beta_i, \beta^*_{-i}) \approx  \hat u_i (\beta_i, \beta^*_{-i}) := \tfrac{1}{n_v} \sum_{v} u_i( \beta_i(v_i), \beta_{-i}((v_{-i}))) $. The relative loss in the expected utility the agent receives when playing the computed strategy $s_i$ instead of the analytical equilibrium $\beta_i^*$
    \begin{equation} \label{eq:util_loss}
	\mathcal L_i(\beta_i)\ =\ 1 - \frac{\hat u_i(\beta_i, \beta^*_{-i})}{\hat u_i(\beta^*_i,\beta^*_{-i})},
    \end{equation}
     while all other agents play the equilibrium strategy $\beta^*_{-i}$.
    \item \textit{$L_2$ Distance}. The probability-weighted root mean squared error of $\beta_i$ and $\beta^*_i$ in the action space, which approximates the $L_2$ distance of these two functions:
    \begin{equation}
    	L_2(\beta_i)\ =\ \bigg(\frac{1}{n_\text{batch}}\sum_{v_i}\left(\beta_i(v_{i}) - \beta^*_i(v_{i})\right)^2\bigg)^\frac{1}{2}.
    \end{equation}
    This way we ensure that our computed strategy not only performs similarly in terms of utility but also approximates the actual known BNE.
\end{description}

\begin{table}[h]
  \caption{Evaluation of SODA against Analytical Equilibrium Strategies.}
  \begin{center}
      \begin{tabular}{ l l c c c c }
          \toprule
          Utility Model & Pricing Rule & Bidders & $\ell$ & $\mathcal L$ & $L_2$ \\
          \midrule
          \multirow{8}{*}{Quasi-Linear (QL)} & \multirow{4}{*}{First-Price} 
            & 2  & 0.000 (0.000) & 0.002 (0.000) & 0.010 (0.000) \\
          & & 3  & 0.000 (0.000) & 0.001 (0.000) & 0.010 (0.000) \\
          & & 5  & 0.000 (0.000) & 0.002 (0.000) & 0.023 (0.001) \\
          & & 10 & 0.001 (0.000) & 0.011 (0.000) & 0.084 (0.002) \\
          \cmidrule(lr){2-6}
          & \multirow{4}{*}{Second-Price} 
            & 2  & 0.000 (0.000) & 0.000 (0.000) & 0.013 (0.000) \\
          & & 3  & 0.000 (0.000) & 0.001 (0.000) & 0.014 (0.000) \\
          & & 5  & 0.000 (0.000) & 0.002 (0.000) & 0.028 (0.001) \\
          & & 10 & 0.000 (0.000) & 0.011 (0.000) & 0.092 (0.001) \\
          \midrule
          \multirow{8}{*}{Return-on-Invest (ROI)} & \multirow{4}{*}{First-Price} 
            & 2  &  0.000 (0.000) & 0.014 (0.001) & 0.013 (0.000) \\
          & & 3  &  0.000 (0.000) & 0.004 (0.000) & 0.011 (0.001) \\
          & & 5  &  0.000 (0.000) & 0.002 (0.000) & 0.014 (0.000) \\
          & & 10 &  0.001 (0.001) & 0.005 (0.000) & 0.070 (0.001) \\
          \cmidrule(lr){2-6}
          & \multirow{4}{*}{Second-Price} 
            & 2  & 0.000 (0.000) &  0.000 (0.000) & 0.016 (0.000) \\
          & & 3  & 0.000 (0.000) &  0.001 (0.000) & 0.013 (0.000) \\
          & & 5  & 0.000 (0.000) &  0.002 (0.000) & 0.020 (0.000) \\
          & & 10 & 0.000 (0.000) &  0.009 (0.000) & 0.077 (0.000) \\
          \bottomrule
      \end{tabular}
  \end{center}
  \label{tab:results_soda}
  \footnotesize
  The mean (and standard deviation) of the relative utility loss in the discretized game $\ell$ and the approximated utility loss $\mathcal L$ and the $L_2$ norm with respect to the analytical BNE are reported for different utility models (payoff-maximizing, ROI-maximizing), payment rules (first- and second-price), and different numbers of agents ($N \in \{2,3, 5, 10\}$). In all settings, we assume a uniform prior.
\end{table}

To compute these metrics, we sample valuations (batch size $n_v = 2^{22}$), identify for each valuation the nearest discrete valuation  $v \in \Vcal^d$, and then sample a discrete bid $b \in \Acal^d$ from the computed strategy. The sampled bids are denoted by $\beta_i(v)$ in the definitions above. The same simulation procedure is used to approximate the expected revenue.
As Table \ref{tab:results_soda} shows, the computed strategies of the discretized game closely approximate the analytical BNE in the original continuous setting. 

\subsection{Additional Experiments for Combinations of ROI and ROS and Different Budgets} \label{app:payoff_robustness}
To check for the robustness of our observations, we also consider different variations of the different utility functions.
First, we look at a generalized version of the ROI and ROS utility models by considering a convex combination of both, which is defined by
\begin{equation} \label{eq:rois}
    u_i^{ROIS}(b, v_i) :=  (1-\lambda) u_i^{ROI}(b, v_i) + \lambda u_i^{ROS}(b, v_i)=  x_i(b) \dfrac{v_i - (1-\lambda) \, p_i(b)}{p_i(b)},
\end{equation}
with some parameter $ \lambda \in [0,1]$. Again, we apply SODA for different values of $\lambda$ and compare the revenue of the first- and second-price auction. The algorithm converges in all settings (utility loss $\ell <  10^{-4}$) and we use the computed strategies to simulate auctions and approximate the expected revenue. In the first plot of Figure  \ref{fig:soda_robust}, we can observe that except for ROS, where submitting the highest bid is the BNE for both payment rules, the revenue of the first-price auction is lower compared to the second-price auction.

\begin{figure}[h]
    \begin{center}
        \begin{subfigure}{0.32\textwidth}
            \includegraphics[width=0.9\textwidth]{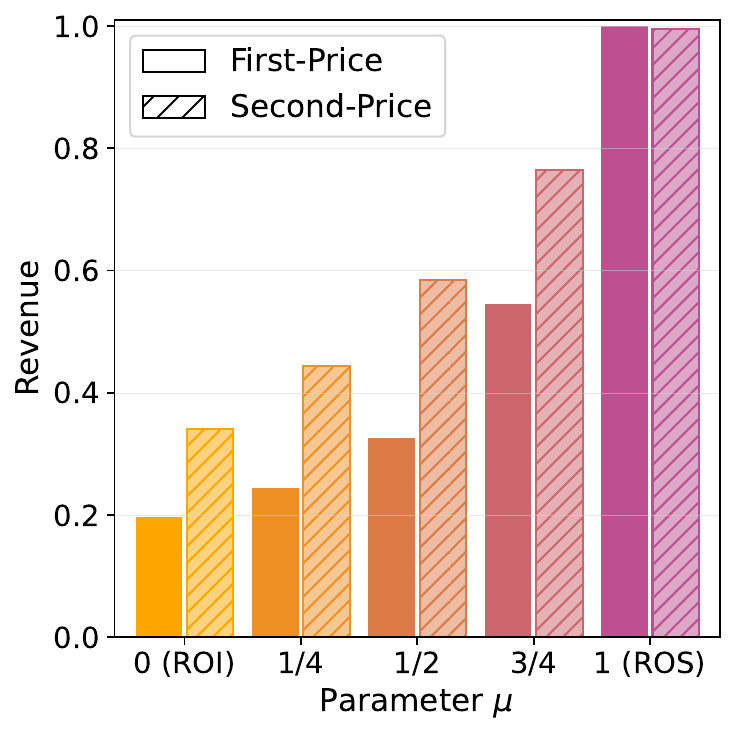}
            \caption{Combination of ROI and ROS}
        \end{subfigure}
        \begin{subfigure}{0.32\textwidth}
            \includegraphics[width=0.9\textwidth]{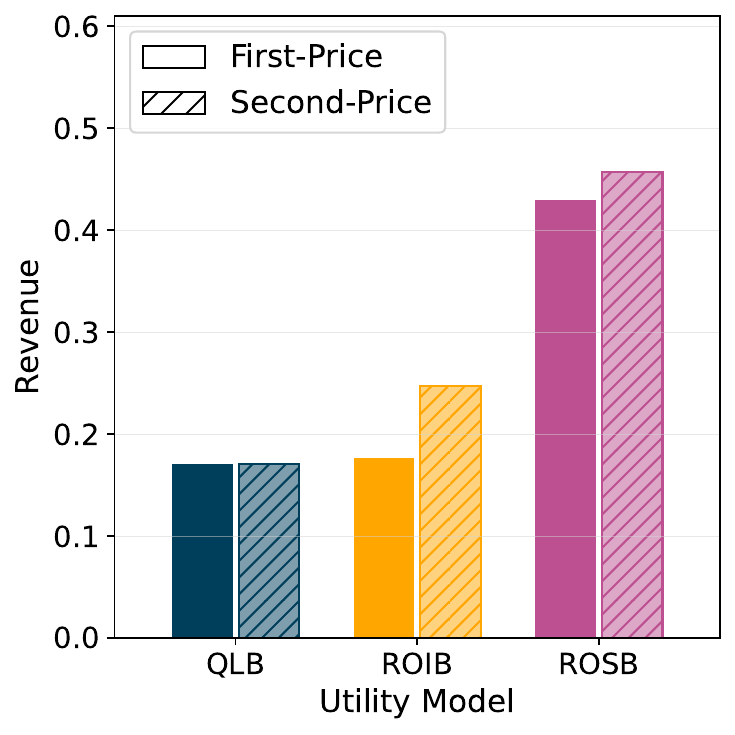}
            \caption{Budget Constraints (B=1.01)}
        \end{subfigure}
        \begin{subfigure}{0.32\textwidth}
            \includegraphics[width=0.9\textwidth]{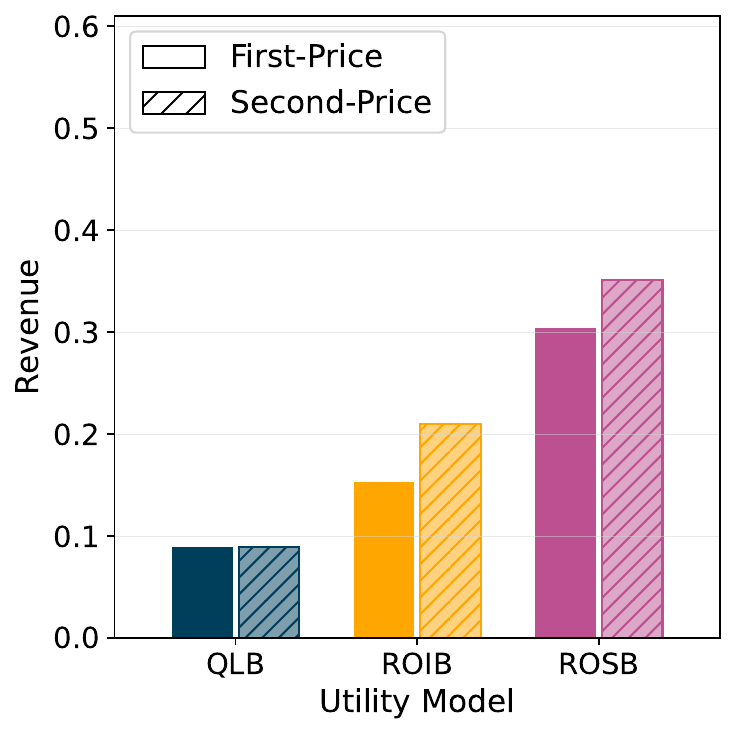}
            \caption{Budget Constraints (B=0.81)}
        \end{subfigure}
        \caption{Revenue for Two Bidders and Different Variations of the Utility Functions.}
        \label{fig:soda_robust}
    \end{center}
    \footnotesize
    We use the computed strategies of SODA to approximate the expected revenue for the first- and second-price auctions with $n=2$ bidders and variations of the utility models. The first plot shows the utility model described in Equation \eqref{eq:rois} for different values of the parameter $\mu$. In the second an third plot, we show the revenue of the setting described in Equation \eqref{eq:budget} for different budgets $B$. The revenue is approximated by simulating $2^{22}$ auctions for each setting. We plot the mean over ten runs. The standard deviation in all settings over these runs is less than $ 2 \cdot 10^{-2} $.
\end{figure}

Second, we model budget constraints in all three settings for better comparison. Similar to the ROSB model, we add a log-barrier function to the QL and ROI utility functions and get
\begin{equation} \label{eq:budget}
    u_i^{\alpha B }(b,v_i) =  u_i^{\alpha}(b,v_i) + \log(B - p_i(b)) \quad \text{with } \alpha \in \{\text{QL }, \text{ROI }, \text{ROS}\}.
\end{equation}
To be well defined, the budget has to be slightly larger than the maximal bid agents can submit. To that end, we restrict the action space to $\Acal = [0, B-0.01] $ for a given budget $B > 0$. In our experiments, we consider $B \in \{ 1.01, 0.81\}$. The corresponding revenues for the different utility models are visualized in Figure \ref{fig:soda_robust} (b) and (c). We note that the bids submitted and, thus, the revenue is lower compared to a situation without budgets. This is due to the log-barrier function, which also penalizes payments that are close to the budget. Nevertheless, we can observe similar differences in revenues between the two payment rules for ROIB and ROSB, while the revenues for the payoff-maximizing agents with budget (QLB) remain comparable.

\subsection{Additional Experiments for Bandit Algorithms} \label{app:bandit_bayesian}
In Section \ref{sec:bandit_bayesian}, we already observed that even in the incomplete-information model, we can learn equilibrium strategies using simple bandit algorithms such as Exp3.
Furthermore, we observed the same outcomes with respect to revenue for the different utility models, as predicted by the equilibrium analysis using SODA.

In this section, we report the results for additional bandit algorithms such as $\varepsilon$-Greedy and Thompson-Sampling and for the Q-Learning algorithm used in \citet{banchio2022artificial}. We report the relative utility loss, which allows us to evaluate if the played actions are close to the Bayes-Nash equilibrium. 
Specifically, after $10^7$ learning iterations, we examine the final 40,000 iterations, recording each action along with its respective valuation to construct a distributional strategy (see Section \ref{sec:soda}). 
Examples of these strategies for ROI-maximizing agents in a first-price auction are visualized in Figure \ref{fig:bandit_roi}.
\begin{figure}[h]
    \begin{center}
        \includegraphics[width=0.9\textwidth]{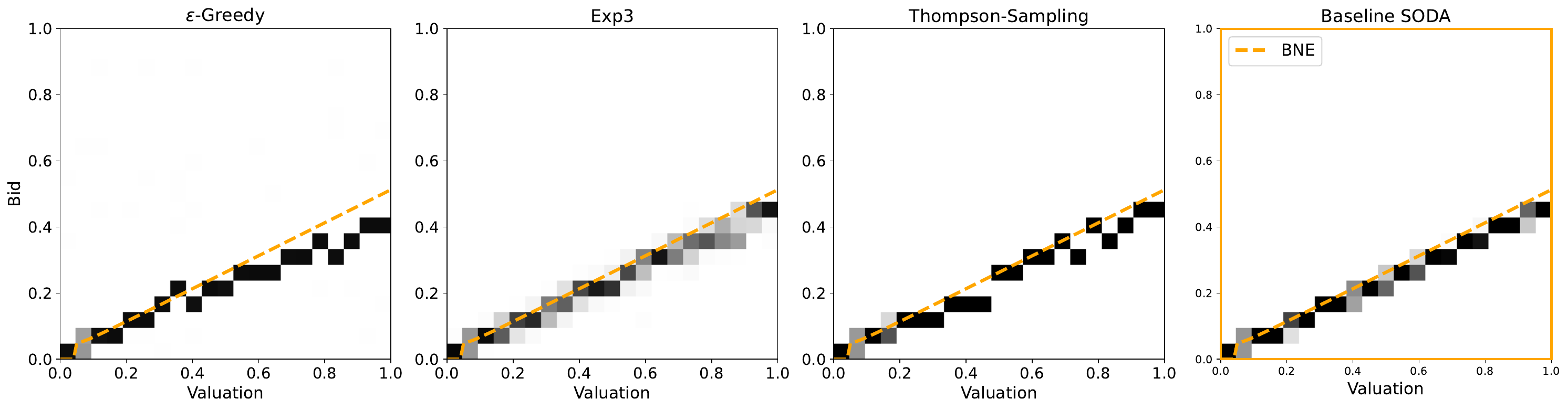}
        \caption{Strategies for ROI-Maximizing Agents using Bandit Algorithms in a FPSB Auction with Uniform Prior.}
        \label{fig:bandit_roi}
    \end{center}
    \footnotesize
    We run $\varepsilon$-Greedy, Exp3, and Thompson Sampling for $10$ million iterations and visualize the frequency of the last $40\thinspace000$ bids for the respective valuations, i.e., the induced distributional strategies. On the last plot, we show the distributional equilibrium strategy computed with SODA. The colored lines denote the BNE in the continuous setting.
\end{figure}

Based on these constructed strategy profiles, we can compute the relative utility loss, representing the relative performance improvement achievable if the agent were to best respond to the opponents' strategies. 
We compare the results with the approximated equilibrium strategies using SODA (with the standard projected gradient update step due to better performance). The setting is identical to the one described in  Section \ref{sec:bandit_bayesian}. 
The results can be found in Table \ref{tab:bandits_bayesian}. 
For a better interpretation of the values: If agents with quasilinear utility function would collude and bid only half of the amount they would bid in equilibrium, i.e., $\tfrac 1 3 v$, the relative utility loss would be at $\ell = 0.20$ with expected revenue of $0.25$. The relatively high numbers for $\varepsilon$-Greedy are due to the constant exploration factor $\varepsilon$. 
Other than that, we observe that bandit algorithms such as Exp3 and Thompson-Sampling perform very well and learn to play an approximate Bayes-Nash equilibrium.

\begin{table}[h]
  \caption{Evaluation of Bandit Algorithms in the Incomplete-Information Setting.}
  \begin{center}
      \begin{tabular}{ l r r c c }
          \toprule
          Utility Model & Pricing Rule & Learner & Rel. Utility Loss $\ell$ & Revenue\\
          \midrule
          \multirow{8}{*}{QL} & \multirow{4}{*}{First-Price} & SODA  & $<0.001$ & 0.496 \\
          & & $\varepsilon$-Greedy        & 0.129 (0.006) & 0.509 (0.006) \\ 
          & & Exp3                        & 0.025 (0.001) & 0.497 (0.003) \\
          & & Thompson-Sampling           & 0.003 (0.000) & 0.494 (0.001) \\
          & & Q-Learning (Baseline) & 0.021 (0.017) & 0.506 (0.002) \\ \cmidrule(lr){2-5}
          & \multirow{4}{*}{Second-Price} & SODA  & $<0.001$ & 0.497 \\
          & & $\varepsilon$-Greedy        & 0.053 (0.003) & 0.496 (0.004) \\ 
          & & Exp3                        & 0.017 (0.003) & 0.496 (0.002) \\
          & & Thompson-Sampling           & 0.007 (0.000) & 0.497 (0.002) \\
          & & Q-Learning (Baseline) & 0.010 (0.011) & 0.499 (0.003) \\ 
          \midrule
          \multirow{8}{*}{ROI} & \multirow{4}{*}{First-Price} & SODA  & $<0.001$ & 0.344 \\
          & & $\varepsilon$-Greedy        & 0.113 (0.038) & 0.355 (0.016) \\ 
          & & Exp3                        & 0.026 (0.006) & 0.350 (0.002) \\
          & & Thompson-Sampling           & 0.068 (0.027) & 0.349 (0.009) \\
          & & Q-Learning (Baseline) & 0.140 (0.051) & 0.406 (0.004) \\ \cmidrule(lr){2-5}
          & \multirow{4}{*}{Second-Price} & SODA  & $<0.001$ & 0.502 \\
          & & $\varepsilon$-Greedy        & 0.031 (0.005) & 0.499 (0.006)\\ 
          & & Exp3                        & 0.031 (0.006) & 0.486 (0.008) \\
          & & Thompson-Sampling           & 0.016 (0.008) & 0.489 (0.009) \\
          & & Q-Learning (Baseline) & 0.010 (0.004) & 0.492 (0.006) \\ 
          \midrule
          \multirow{8}{*}{ROSB} & \multirow{4}{*}{First-Price} & SODA  & $<0.001$ & 0.544 \\
          & & $\varepsilon$-Greedy        & 0.241 (0.074) & 0.523 (0.039) \\ 
          & & Exp3                        & 0.053 (0.004) & 0.538 (0.001) \\
          & & Thompson-Sampling           & 0.002 (0.000) & 0.543 (0.000) \\
          & & Q-Learning (Baseline) & 0.033 (0.029) & 0.545 (0.006) \\ \cmidrule(lr){2-5}
          & \multirow{4}{*}{Second-Price} & SODA  & $<0.001$ & 0.555 \\
          & & $\varepsilon$-Greedy        & 0.056 (0.004) & 0.551 (0.003) \\ 
          & & Exp3                        & 0.026 (0.008) & 0.551 (0.003) \\
          & & Thompson-Sampling           & 0.011 (0.006) & 0.554 (0.002) \\
          & & Q-Learning (Baseline) & 0.012 (0.012) & 0.551 (0.003) \\
          \bottomrule
      \end{tabular}
  \end{center}
  \label{tab:bandits_bayesian}
  \footnotesize
  The mean (and standard deviation) of the relative utility loss $\ell$ and the average revenue for one agent in the discretized game are reported for auction formats with $N=3$ agents, uniform prior, different utility models (QL, ROI, and ROSB), and different payment rules (first- and second-price). The mean and standard deviation is computed over 10 runs for each experiment.
\end{table}

Additionally, we also report the average revenue of the last 40,000 iterations. In all settings, we closely approximate the revenue as predicted by the equilibrium analysis. In particular, we find that when the utility loss is relatively high, e.g., for Q-learning in the first-price auction with ROI-maximizing agents, the return is higher than expected, indicating a failure to learn the optimal strategy rather than a collusive behavior of the agents.

	\clearpage
	\newpage 
\end{appendix}

\end{document}